\DeclareMathAlphabet{\mathpzc}{OT1}{pzc}{m}{it}
\newtheorem{theorem}{Theorem}
\newtheorem{lemma}[theorem]{Lemma}
\newtheorem{corollary}[theorem]{Corollary}
\newcommand{\comment}[1]{}
\renewcommand{\Pr}{\mathop{\bf Pr}\nolimits}
\newcommand{\E}{\mathbb{E}}
\begin{document}

\title{Balanced Allocations and Double Hashing \titlenote{}}
\numberofauthors{1} %
\author{
\alignauthor
Michael Mitzenmacher\titlenote{Supported in part by NSF grants CCF-0915922, IIS-0964473, and CNS-1011840.}\\
       \affaddr{Harvard University}\\
       \affaddr{School of Engineering and Applied Sciences}\\
       \email{michaelm@eecs.harvard.edu}
}

\maketitle

\begin{abstract}
With double hashing, for an item $x$, one generates two hash values
$f(x)$ and $g(x)$, and then uses combinations $(f(x) +i g(x)) \bmod n$
for $i=0,1,2,\ldots$ to generate multiple hash values from the initial
two.  We show that the performance difference between double hashing
and fully random hashing appears negligible in the standard balanced
allocation paradigm, where each item is placed in the least loaded of
$d$ choices, as well as several related variants.  We perform an
empirical study, and consider multiple theoretical approaches.  While
several techniques can be used to show asymptotic results for the
maximum load, we demonstrate how fluid limit methods explain 
why the behavior of double hashing and fully random
hashing are essentially indistinguishable in this context.
\end{abstract}

\section{Introduction}
\label{sec:introduction}
The standard balanced allocation paradigm works as follows: suppose
$n$ balls are sequentially placed into $n$ bins, where each ball is
placed in the least loaded of $d$ uniform independent choices of the
bins.  Then the maximum load (that is, the maximum number of balls in
a bin) is $\frac{\log \log n}{\log d} + O(1)$,
much lower than the $\frac{\log n}{\log \log n} (1 +o(1))$ obtained
where each ball is placed according to a single uniform choice \cite{ABKU}.

The assumption that each ball obtains $d$ independent uniform choices
is a strong one, and a reasonable question, tackled by several other
works, is how much randomness is needed for these types of results
(see related work below).  Here we consider a novel approach,
examining balanced allocations in conjunction with {\em double hashing}.  In
the well-known technique of standard double hashing for open-addressed hash tables,
the $j$th ball obtains two hash values, $f(j)$ and $g(j)$.
For a hash table of size $n$, $f(j) \in [0,n-1]$ and $g(j) \in
[1,n-1]$.  Successive locations $h(j,k) = f(j)+kg(j) \bmod n$,
$k=0,1,2,\ldots...$, are tried until an empty slot is found.
As discussed later in this introduction, double hashing is extremely conducive
to both hardware and software implementations and is used in many deployed systems.

In our context, we use the double hashing approach somewhat
differently.  The $j$th ball again obtains two hash values $f(j)$ and
$g(j)$.  The $d$ choices for the $j$th ball are then given by $h(j,k)
= f(j) + k g(j) \bmod n$, $k=0,1,\ldots,d-1$, and the ball is placed
in the least loaded.  We generally assume that $f(j)$ is uniform over
$[0,n-1]$, $g(j)$ is uniform over all numbers in $[1,n-1]$ relatively
prime to $n$, and all hash values are independent.  (It is convenient
to consider $n$ a prime, or take $n$ to be a power of 2 so that the
$g(j)$ are uniformly chosen random odd numbers, to ensure the $h(j,k)$
values are distinct.)

It might appear that limiting the space of random choices available to
the balls in this way might change the behavior of this random process
significantly.  We show that this is not the case both in theory
and in practice.  Specifically, by
``essentially indistinguishable'', we mean that, empirically, for any
constant $i$ and sufficiently large $n$ the fraction of bins of load $i$ is well
within the difference expected by experimental variance for the two methods.
Essentially indistinguishable means that in practice for
even reasonable $n$ one cannot readily distinguish the two methods.
By ``vanishing'' we mean that, analytically, 
for any constant $i$ the asymptotic fraction of bins of load $i$ 
for double hashing differs only by $o(1)$ terms from fully independent choices
with high probability.  A related key result is that $O(\log \log n)$ bounds on the maximum load hold
for double hashing as well.
Surprisingly, the difference between $d$ fully independent choices and $d$ choices using
double hashing are essentially indistinguishable for sufficiently
large $n$ and vanishing asymptotically.  
\footnote{To be clear, we do not mean that there is {\em no} difference
between double hashing and fully random hashing in this setting;  there clearly
is and we note a simple example further in the paper.  As we show, analytically
in the limit for large $n$ the difference is vanishing (Theorem~\ref{mainthm} and Corollary~\ref{cormain}), and for finite
$n$ the results from our experiments demonstrate the difference is essentially
indistinguishable (Section~\ref{sec:sims}).}

As an initial example of empirical results, Table~\ref{table_example} below shows the
fraction of bins of load $x$ for various $x$ taken over 10000 trials,
with $n=2^{14}$ balls thrown into $n$ bins using $d=3$ and $d=4$ choices, using
both double hashing and fully random hash values (where for our proxy for
``random'' we utilize the standard approach of simply generating successive random values using the drand48
function in C initially seeded by time).  Most values are given to five decimal
places.  The performance difference is essentially indistinguishable, well within
what one would expect simply from variance from the sampling process.

\begin{table*}[thp]
\begin{subfigure}[h]{0.45\textwidth}
\centering
\begin{tabular}{|c|c|c|} \hline
Load & Fully Random & Double Hashing \\ \hline
0 & 0.17693 & 0.17691 \\
1 & 0.64664 & 0.64670 \\
2 & 0.17592 & 0.17589 \\
3 & 0.00051 & 0.00051 \\ \hline
\end{tabular}
\caption{3 choices, $n=2^{14}$ balls and bins}
\end{subfigure}
\qquad
\begin{subfigure}[h]{0.45\textwidth}
\centering
\begin{tabular}{|c|c|c|} \hline
Load & Fully Random & Double Hashing \\ \hline
0 & 0.14081 & 0.14081 \\
1 & 0.71840 & 0.71841 \\
2 & 0.14077 & 0.14076 \\
3 & $2.25 \cdot 10^{-5}$ & $2.29\cdot 10^{-5}$ \\ \hline
\end{tabular}
\caption{4 choices, $n=2^{14}$ balls and bins}
\end{subfigure}
\caption{An initial example showing the performance of double hashing compared to fully random hashing.
In our tables, the row with load $x$ gives the fraction of the bins that
have load $x$ over all trials.  So over 10000 trials of throwing   
$n=2^{14}$ balls into $2^{14}$ bins using 3 choices and double hashing, the fraction of bins with load 0
was $0.17691$. \label{table_example}}
\end{table*}

More extensive empirical results appear in Appendix~\ref{sec:sims}.  In
particular, we also consider two extensions to the standard paradigm:
V\"{o}cking's extension (sometimes called $d$-left hashing), where the
$n$ bins are split into $d$ subtables of size $n/d$ laid out left to
right, the $d$ choices consist of one uniform independent choice in
each subtable, and ties for the least loaded bin are broken to the
left \cite{vocking}; and the continuous variation, where the bins
represent queues, and the balls represent customers that arrive as a
Poisson process and have exponentially distributed service
requirements \cite{Mitzenmacher}. We again find empirically that
replacing fully random choices with double hashing leads to essentially
indistinguishable results in practice.\footnote{We encourage the reader to 
examine these experimental results.  However, because we recognize
some readers are as a rule uninterested in experimental results, we have moved
them to an appendix.}

In this paper, we provide theoretical results explaining why this would be the
case.  There are multiple methods available that can yield $O(\log
\log n)$ bounds on the maximum load when $n$ balls are thrown into $n$
bins in the setting of fully random choices.  We therefore first
demonstrate how some previously used methods, including the layered
induction approach of \cite{ABKU} and the witness tree approach of
\cite{vocking}, readily yield $O(\log \log n)$ bounds; this asymptotic
behavior is, arguably, unsurprising (at least in hindsight).  We then
examine the key question of why the difference in empirical
results is vanishing, a much stronger requirement.  For the case of fully
random choices, the asymptotic fraction of bins of each possible load
can be determined using fluid limit methods that yield a family of
differential equations describing the process behavior
\cite{Mitzenmacher}.  It is not a priori clear, however, why the
method of differential equations should necessarily apply when using
double hashing, and the primary result of this paper is to explain why
it in fact applies.  The argument depends technically on the idea that
the ``history'' engendered by double hashing in place of $d$ fully
random hash functions has only a vanishing (that is, $o(1)$) effect
on the differential equations that correspond to the limiting behavior of the bin
loads.  We believe this resolution suggests that double
hashing will be found to obtain the same results as fully random hashing
in other additional hash-based structures, which may be important in practical
settings.

We argue these results are important for multiple reasons.  First, we
believe the fact that moving from fully random hashing to double
hashing does not change performance for these particular balls and
bins problems is interesting in its own right.  But it also has
practical applications; multiple-choice hashing is used in several
hardware systems (such as routers), and double hashing both requires
less (pseudo-)randomness and is extremely conducive to implementation
in hardware \cite{chunkstash,heileman2005caching}.  (As we discuss below, it may also be useful in software
systems.)  Both the fact that double hashing does not change
performance, and the fact that one can very precisely determine the
performance of double hashing for load balancing simply using the same
fluid limit equations as have been used under the assumption of fully
random hashing, are therefore of major importance for designing
systems that use multiple-choice methods (and convincing system
designers to use them).  Finally, as mentioned, these results suggest
that using double hashing in place of fully random choices may
similarly yield the same performance in other settings that make use
of multiple hash functions, such as for
cuckoo hashing or in error-correcting codes, offering the same
potential benefits for these problems. We have explored this issue
further in a subsequent (albeit already published) paper \cite{MT},
where there remain further open questions.  In particular, we have not
yet found how to use the fluid limit analysis used here for these other
problems.  

Finally, it has been remarked to us that all of our arguments here apply beyond
double hashing; any hashing scheme where the $d$ choices for a ball
are made so that they are pairwise independent and uniform would yield
the same result by the same argument.  That is, if for a given ball with $d$ choices 
$h_1,h_2,\ldots,h_d$, for any distinct bins $b_1$ and $b_2$ we have for all $1\leq i,j \leq d, i \neq j$:
$$\Pr(h_i = b_1) = 1/n \mbox{ and }$$
$$\Pr(h_i = b_1 \mbox{ and } h_j = b_2) = \frac{1}{{n \choose 2}},$$
then our results apply.  Unfortunately, we do not know of any actual
scheme besides double hashing in practical use with these properties;
hence we focus on double hashing throughout.  

\subsection{Related Work}

The balanced allocations paradigm, or the power of two choices, has
been the subject of a great deal of work, both in the discrete balls
and bins setting and in the queueing theoretic setting.  See, for
example, the survey articles \cite{KSurvey,TwoSurvey} for references
and applications. 

Several recent works have considered hashing variations that utilize
less randomness in place of assuming perfectly random hash functions;
indeed, there is a long history of work on universal hash functions
\cite{CarterWe79}, and more recently min-wise independent hashing
\cite{mwi}.  Specific recent related works include results on standard
one-choice balls and bins problems \cite{SHF}, hashing with linear
probing with limited independence \cite{ppr}, and tabulation hashing
\cite{pt}; other works involving balls and bins with less randomness
include \cite{godfrey,peres}.  
As another example, Woelfel shows that a variation of V\"{o}cking's results
hold using simple hash functions that utilize a collection of $k$-wise
independent hash functions for small $k$, and a random vector
requiring $o(n)$ space \cite{Woelfel}.  

Another related work in the balls and bins setting
is the paper of Kenthapadi and Panigrahy \cite{kpbalanced}, who
consider a setting where balls are not allowed to choose any two bins,
but are forced to choose two bins corresponding to an edge on an
underlying random graph.  In the same paper, they also show that two
random choices that yield $d$ bins are sufficient for similar $O(\log
\log n)$ bounds on maximum loads that one obtains with $d$ fully
random choices, where in their case each random choice gives a
contiguous block of $d/2$ bins.

Interestingly, the classical question regarding the average length of an unsuccessful 
search sequence for standard double hashing in an open address hash table when the table load is a constant $\alpha$ has been shown to be, up to
lower order terms, $1/(1-\alpha)$, showing that double hashing has
essentially the same performance as random probing (where each ball would have its own random permutation of
the bins to examine, in order, until finding an empty bin)
when using traditional hash tables \cite{bradford2007probabilistic,guibas1978analysis,lueker1993more}.
These results appear to have been derived using different techniques than we utilize here;
it could be worthwhile to construct a general analysis that applies for both schemes.

A few papers have recently suggested using double hashing in schemes
where one would use multiple hash functions and shown
little or no loss in performance.  For Bloom filters, Kirsch and
Mitzenmacher \cite{kirsch:bbb}, starting from the empirical analysis by
Dillinger and Manolios \cite{dillinger3312bfp}, prove that using double hashing has
negligible effects on Bloom filter performance.  
This result is closest in spirit
to our current work;  indeed, the type of analysis here can be used to provide an alternative argument
for this phenomenon, although the case of Bloom filters is inherently simpler.
Several available online implementations of Bloom filters now use this approach, 
suggesting that the double hashing approach can be significantly beneficial in software as well as 
hardware implementations.\footnote{See, for example, \url{http://leveldb.googlecode.com/svn/trunk/util/bloom.cc},
\url{https://github.com/armon/bloomd}, and \url{http://hackage.haskell.org/packages/archive/bloomfilter/1.0/doc/html/bloomfilter.txt}.}
Bachrach and Porat
use double hashing in a variant of min-wise independent sketches \cite{bachrach2010fast}.
The reduction in randomness stemming from using double hashing to generate
multiple hash values can be useful in other contexts.  For
example, it is used in \cite{MVad} to improve results where pairwise
independent hash functions are sufficient for suitably random data;
using double hashing requires fewer hash values to be generated (two in place
of a larger number), which means less randomness in the data is required. 
Finally, in work subsequent to the original draft of this paper \cite{MT}, we have empirically examined
double hashing for other algorithms such as cuckoo hashing, and again found essentially no empirical 
difference between fully random hashing and double hashing in this and other contexts.  However,
theoretical results for these settings that prove this lack of difference are as of yet very limited.


Arguably, the main difference between our work and other related work
is that in our setting with double hashing we find the empirical
results are essentially indistinguishable in practice, and we focus 
on examining this phenomenon.  

\section{Initial Theoretical Results}

We now consider formal arguments for the excellent behavior for double
hashing.  We begin with some simpler but coarser arguments that have
been previously used in multiple-choice hashing settings, based on
majorization and witness trees.  While our witness tree argument
dominates our majorization argument, we present both, as they may be
useful in considering future variations, and they highlight how these
techniques apply in these settings.  
In the following section, we then
consider the fluid limit methodology, which best captures the result
we desire here, namely that the load distributions are essentially the
same with fully random hashing and double hashing.  However, the fluid
limit methodology captures results about the fraction of bins with
load $i$, for every constant value $i$, and does not readily provide $O(\log \log n)$ 
bounds (without specialized additional work, which often depend on the techniques
used below).  The reader conversant with balanced allocation results 
utilizing majorization and witness trees may choose to skip
this section.    

\subsection{A Majorization Argument}
We first note that using double hashing with two choices and using random
hashing with two distinct hash values per ball are equivalent.  With
this we can provide a simple argument, showing the seemingly obvious fact
that using double hashing with $d > 2$ choices is at least as good as
using 2 random choices.  This in turn shows that 
double hashing maintains $\log \log n +O(1)$ maximum load in the standard
balls and bins setting.

Our approach uses a standard majorization and coupling argument, where
the coupling links the random choices made by the processes when using double hashing 
and using random hashing while maintaining the fidelity of both individual processes.
(See, e.g., \cite{ABKU,Steger}, or \cite{majorization} for more 
background on majorization.)  Let $\vec{x}=(x_1,\ldots,x_n)$ be a vector
with elements in non-increasing order, so $x_1 \geq x_2 \ldots \geq x_n$,
and similarly for $\vec{y}=(y_1,\ldots,y_n)$.
We say that $\vec{x}$ majorizes $\vec{y}$ if $\sum_{i=1}^n x_i = \sum_{i=1}^n
y_i$ and, for $j < n$, $\sum_{i=1}^j x_i \geq \sum_{i=1}^j y_i$.  For
two Markovian processes $X$ and $Y$, we say that $X$ stochastically
majorizes $Y$ if there is a coupling of the processes $X$ and $Y$ so
that at each step under the coupling the vector representing the state
of $X$ majorizes the vector representing the state of $Y$.  We note that because we
use the loads of the bins as the state, the balls and bins processes
we consider are Markovian.

We make use of the following simple and standard lemma.  (See, for example, \cite[Lemma 3.4]{ABKU}.)
\begin{lemma} 
\label{lem:maj0}
If $\vec{x}$ majorizes $\vec{y}$ for vectors 
$\vec{x}$, $\vec{y}$ of positive integers, and $e_i$ represents a unit vector with
a 1 in the $i$th entry and 0 elsewhere, then $\vec{x}+e_i$ majorizes $\vec{y}+e_j$ for $j \geq i$.  
\end{lemma}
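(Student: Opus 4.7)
The plan is to reduce to the equivalent characterization of majorization via top-$k$ partial sums. Define $T_k(\vec{v})$ to be the sum of the $k$ largest entries of $\vec{v}$; because $T_k$ is permutation-invariant, I can compute $T_k(\vec{x}+e_i)$ and $T_k(\vec{y}+e_j)$ directly without first re-sorting those bumped vectors. The statement then reduces to showing $T_k(\vec{x}+e_i) \geq T_k(\vec{y}+e_j)$ for all $1 \leq k < n$, with equality at $k = n$; the latter is immediate since both sides gain exactly one.

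For each such $k$, I would establish the identity that, when $\vec{v}$ is sorted non-increasing, $T_k(\vec{v}+e_\ell) - T_k(\vec{v})$ is $1$ if $\ell \leq k$ or ($\ell > k$ and $v_\ell = v_k$), and is $0$ otherwise; this just records whether the bumped entry lies in the top-$k$ window after the bump. Applied to both sides, $T_k(\vec{x}+e_i) \geq T_k(\vec{y}+e_j)$ can only fail when $T_k(\vec{x}) = T_k(\vec{y})$, the $\vec{y}$-bump contributes $+1$, and the $\vec{x}$-bump contributes $0$. Using $j \geq i$, this pins the bad subcase down to: $i > k$, $x_i < x_k$, and $y_j = y_k$ (and hence $y_k = y_{k+1} = \cdots = y_j$ by sortedness of $\vec{y}$).

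The main obstacle is ruling out this residual subcase, which is the only step where the full chain of majorization inequalities gets used together. I would combine $T_{k-1}(\vec{x}) \geq T_{k-1}(\vec{y})$ with $T_k(\vec{x}) = T_k(\vec{y})$ to extract $x_k \leq y_k$, and combine $T_i(\vec{x}) \geq T_i(\vec{y}) = T_k(\vec{y}) + (i-k)y_k$ with the trivial sorted bound $\sum_{\ell=k+1}^i x_\ell \leq (i-k)x_k$ to extract $y_k \leq x_k$. Together these pin $x_k = y_k$ and $\sum_{\ell=k+1}^i x_\ell = (i-k)x_k$, which by sortedness of $\vec{x}$ forces $x_{k+1} = \cdots = x_i = x_k$, contradicting $x_i < x_k$. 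I expect this squeeze to be the most delicate pinch point; it is also the only step where the hypothesis $j \geq i$ is used in an essential way.
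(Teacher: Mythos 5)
Your proof is correct. Note first that the paper does not prove this lemma at all; it cites it as standard and points to~\cite[Lemma~3.4]{ABKU}, so there is no in-paper argument to compare against. Your self-contained argument via the top-$k$ partial-sum functional $T_k$ is sound: the identity for $T_k(\vec{v}+e_\ell)-T_k(\vec{v})$ on a sorted $\vec{v}$ is correct (the new value $v_\ell+1$ displaces one copy of $v_k$ from the top-$k$ window precisely when $\ell\le k$ or $v_\ell=v_k$), the reduction of the failure case to $i>k$, $x_i<x_k$, $y_k=\cdots=y_j$ using $j\ge i$ is right, and the two squeezes $x_k\le y_k$ (from $T_{k-1}$, $T_k$) and $y_k\le x_k$ (from $T_i$ and sortedness of $\vec{x}$) do force $x_{k+1}=\cdots=x_i=x_k$, contradicting $x_i<x_k$.

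One point you should make explicit rather than leave implicit: the step ``$T_k(\vec{x}+e_i) \geq T_k(\vec{y}+e_j)$ can only fail when $T_k(\vec{x}) = T_k(\vec{y})$'' silently uses integrality. Since each bump contributes $0$ or $1$, you need $T_k(\vec{x}) > T_k(\vec{y})$ to imply $T_k(\vec{x}) \geq T_k(\vec{y})+1$, which holds for integer vectors but not in general. This is not a gap in your proof (the lemma is stated for integer vectors) but it is exactly where the integer hypothesis earns its keep: the statement is false over the reals. For instance, $\vec{x}=(1.5,0.5)$ majorizes $\vec{y}=(1,1)$, yet $\vec{x}+e_2=(1.5,1.5)$ does not majorize $\vec{y}+e_2=(2,1)$ since $1.5<2$. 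Flagging this makes the proof more robust and explains why ``positive integers'' appears in the hypothesis.
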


\begin{theorem}
Let process $X$ be the process where $m$ balls are placed into $n$
bins with two distinct random choices, and $Y$ be the corresponding scheme
with $d > 2$ choices using double hashing.  Then $X$ stochastically
majorizes $Y$. 
\end{theorem}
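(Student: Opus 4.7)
The plan is to couple the processes $X$ and $Y$ step-by-step and preserve majorization at each step using Lemma~\ref{lem:maj0}, leveraging the preamble's observation that the first two samples from double hashing are jointly distributed exactly as two uniform distinct bin choices (I will assume $n$ prime so this equivalence is clean; the power-of-two setting adapts analogously). The key idea is to execute the coupling in rank space rather than on physical bin indices, so that the $d-2$ extra choices in $Y$ can only push the chosen rank further toward the less-loaded tail of the sorted load vector.

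I would proceed by induction on the number of balls $t$ placed, with inductive hypothesis that under the coupling the sorted load vector $\vec{x}^{(t-1)}$ majorizes $\vec{y}^{(t-1)}$; the base case $t=0$ is trivial. At step $t$, let $\pi_X$ be a permutation of $\{1,\dots,n\}$ such that bin $\pi_X(r)$ has the $r$-th largest load in $X$, and let $\pi_Y$ be the analogous permutation for $Y$ (ties broken arbitrarily but consistently). Sample one uniformly random ordered pair of distinct ranks $(R_1,R_2)$. In $X$, place the ball in bin $\pi_X(\max(R_1,R_2))$. In $Y$, set $f=\pi_Y(R_1)$ and $g=\pi_Y(R_2)-\pi_Y(R_1)\bmod n$, compute the remaining double-hashing choices $c_k=f+(k-1)g\bmod n$ together with $R_k=\pi_Y^{-1}(c_k)$ for $k=3,\dots,d$, and place the $Y$-ball in bin $\pi_Y(j)$ where $j=\max(R_1,\dots,R_d)$.

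To finish the induction step I would verify two things. First, the coupling has the correct marginal distribution on each side: for $X$ this is immediate, and for $Y$ the preamble's two-choice equivalence guarantees that $(\pi_Y^{-1}(f),\pi_Y^{-1}(f+g))$ is already uniform over distinct ordered rank pairs under the original process, while the bijection $(f,g)\leftrightarrow(c_1,c_2)$ ensures that conditioning on $(R_1,R_2)$ determines $(f,g)$, and hence the remaining $R_k$, by the same formula used in the real process. Second, setting $i=\max(R_1,R_2)$, we have $j\geq i$ pointwise because $j$ is a maximum over a superset. Lemma~\ref{lem:maj0} then yields that $\vec{x}^{(t)}=\vec{x}^{(t-1)}+e_i$ majorizes $\vec{y}^{(t)}=\vec{y}^{(t-1)}+e_j$, closing the induction; composing the step-couplings over $t=1,\dots,m$ produces a joint coupling of the full processes under which $X$'s sorted vector majorizes $Y$'s at every step, which is stochastic majorization as claimed.

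The part that requires the most care, and the place the preamble's observation earns its keep, is verifying that the coupling genuinely reproduces the joint law of $Y$'s $d$ rank choices. Without the two-choice equivalence, one would have to grapple directly with the fact that the joint distribution of $d$ double-hashing samples is supported on an arithmetic-progression structure and is not uniform over distinct $d$-tuples; with it, the argument collapses to the easy observation that once $(R_1,R_2)$ is fixed, the rest of the AP is determined identically in the coupled process and in the original one, so no subtle accounting of the dependent hash structure is needed beyond what the two-choice case already gives.
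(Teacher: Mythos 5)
Your proof is correct and follows essentially the same route as the paper's: couple the first two rank choices between $X$ and $Y$, observe that the extra $d-2$ double-hashing choices in $Y$ can only push the selected rank further into the less-loaded tail, and conclude via Lemma~\ref{lem:maj0}. You are somewhat more careful than the paper's terse proof in spelling out that the coupling performs the double hashing in bin-index space through $\pi_Y$ (rather than continuing the arithmetic progression in rank space), which is what actually guarantees the coupling reproduces $Y$'s true joint law.
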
 
\begin{proof}
At each time step, we let $\vec{x}(t)$ and $\vec{y}(t)$ be the vectors 
corresponding to the loads sorted in decreasing order.  
We inductively claim that $\vec{x}(t)$ majorizes $\vec{y}(t)$ at all time
steps under the coupling of the processes where if the 
$a$th and $b$th bins in the sorted order for $X$ are chosen,
the $a$th and $b$th bins in the sorted order for $Y$ are chosen
as the first two choices, and then the remaining choices are determined by double hashing.
That is, the $d$ hash choices are such that the gap between successive choices is $b-a$,
so the choices are $a$, $b$, $2b-a$, $3b-2a$, and so on (modulo the size of the table).  
Clearly $\vec{x}(0)$ majorizes $\vec{y}(0)$ as the vectors are equal. 
It is simple to check that this process maintains the majorization
using Lemma~\ref{lem:maj0}, as the coordinate that increases in $\vec{y}(t)$ at each step is deeper in the sorted
order than the coordinate that increases in $\vec{x}(t)$. 
\end{proof}

As two random choices stochastically majorizes $d$ choices from double hashing under this coupling,
we see that 
$$\Pr(x_1 \geq c) \geq \Pr(y_1 \geq c)$$
for any value $c$. 
Since the seminal result of \cite{ABKU} shows that using two
choices gives a maximum load of $\log \log n +O(1)$
with high probability,
we therefore have this corollary.
\begin{corollary}
The maximum load using $d > 2$ choices and double hashing for $n$ balls and $n$ bins is 
$\log \log n +O(1)$ with high probability.
\end{corollary}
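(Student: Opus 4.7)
The plan is to obtain the corollary as a short consequence of the stochastic majorization theorem that precedes it, together with the ABKU bound. The key elementary observation is that if $\vec{x}$ majorizes $\vec{y}$ with both vectors sorted in non-increasing order, then taking $j=1$ in the definition of majorization yields $x_1 \geq y_1$; that is, pointwise majorization of the sorted load vectors implies pointwise domination of the maximum load.

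First, I would invoke the preceding theorem with $m = n$ to obtain a coupling of the processes $X$ (the two-distinct-random-choice process) and $Y$ (the $d > 2$ double-hashing process) under which $\vec{x}(t)$ majorizes $\vec{y}(t)$ for every $t \leq n$. Combined with the elementary observation above, this yields that under the coupling the maximum load in $Y$ after $n$ balls is at most the maximum load in $X$ after $n$ balls. Since the coupling preserves the marginal law of each process, this gives $\Pr(y_1 \geq c) \leq \Pr(x_1 \geq c)$ for every $c$, as already noted in the remark immediately preceding the corollary statement.

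Second, I would cite the seminal ABKU result, which establishes that when $n$ balls are placed sequentially into $n$ bins using two independent uniform choices (with the ball going to the less-loaded bin, ties broken arbitrarily), the maximum load is $\log \log n + O(1)$ with high probability. The only thing to check is that the same $\log \log n + O(1)$ bound holds for the variant with two \emph{distinct} random choices used in the theorem; this is immediate either by noting (as the paper already does at the start of the section) that two distinct random choices coincide with double hashing for $d = 2$, or simply by observing that the layered induction argument of ABKU is insensitive to forbidding the diagonal coincidence $h_1 = h_2$. Plugging this bound into the inequality $\Pr(y_1 \geq c) \leq \Pr(x_1 \geq c)$ with $c = \log \log n + C$ for a suitable constant $C$ gives the corollary.

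There is no substantive obstacle: the corollary is essentially a one-line combination of the majorization theorem and the ABKU bound. The only point requiring a sentence of care is the equivalence between the two-distinct-choice model and the standard ABKU with-replacement model, which is routine.
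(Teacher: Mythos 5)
Your proposal is correct and follows essentially the same route as the paper: it invokes the stochastic majorization theorem, observes that majorization of sorted load vectors implies domination of the maximum load, and plugs in the ABKU $\log\log n + O(1)$ bound for two choices. The one small point you spell out more carefully than the paper does --- that the process $X$ in the theorem uses two \emph{distinct} choices rather than two independent choices with replacement, and that this needs to be reconciled with the ABKU statement --- is a reasonable sentence of care, and your two suggested resolutions (either observe distinct-pair sampling is the $d=2$ double-hashing case, or note the ABKU layered induction is indifferent to forbidding the diagonal) are both fine.
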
 

We note that similarly, when using double hashing, we can show that using $d$ choices stochastically
majorizes using $d+1$ choices.

\subsection{A Witness Tree Argument}
\label{sec:witness}
It is well known that $d>2$ choices performs better than 2 choices
for multiple-choice hashing;  while the maximum load remains $O(\log \log n)$,
the constant factor depends on $d$, and can be important in practice.
Our simple majorization argument does not provide this type of bound, so
to achieve it, we next utilize the witness tree approach, following closely the work of
V\"{o}cking \cite{vocking}.  (See also \cite{simplified} for related
arguments.)  While we discuss the case of insertions
only, the arguments also apply in settings with deletions as well; see
\cite{vocking} for more details.  Similarly, here we consider only the
standard balls and bins setting of $n$ balls and $n$ bins with $d \geq
3$ being a constant, but similar results for $m = cn$ balls for some constant $c$ can also be derived
by simply changing the ``base case'' at the leaves of the witness tree accordingly,
and similar results for V\"{o}cking's scheme can be derived by using the ``unbalanced'' witness
tree used by V\"{o}cking \cite{vocking} in place of the balanced one.

These methods allow us to prove statements of the following form:
\begin{theorem} 
\label{thm:vresult}
Suppose $n$ balls are placed into $n$ bins using
the balanced allocation scheme with double hashing as described above.
Then with $d$ choices the maximum load is $\log \log n / \log d + O(d)$
with high probability.
\end{theorem}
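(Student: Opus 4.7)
The plan is to adapt V\"{o}cking's witness tree argument \cite{vocking}, verifying that every probabilistic step survives the weaker independence provided by double hashing. The key observation is that the argument in \cite{vocking} uses only two properties of the $d$ choices: (i) across distinct balls, the choice vectors are independent; and (ii) each individual choice of a single ball is marginally uniform on $[0,n-1]$. Double hashing satisfies both: the seeds $(f(x),g(x))$ are drawn independently per ball, giving (i), and each $h(x,k) = f(x) + k g(x)\bmod n$ is marginally uniform on $[0,n-1]$, giving (ii). Notably, the argument never requires that the $d$ choices of a single ball be mutually independent, only their marginal distributions.

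First I would set up the witness tree exactly as in \cite{vocking}: if a ball $x^\star$ is placed at height $L$, then each of the $d$ bins it chose already had load at least $L-1$ at insertion time, so for each such bin we select an earlier ball whose insertion pushed that bin to height $L-1$, and recurse. This gives a balanced $d$-ary tree of depth $L - L_0$ rooted at $x^\star$, where the cutoff height $L_0 = O(d)$ is chosen so that the base event at each leaf---namely that a specific bin ever carries load at least $L_0$---has small probability. The leaf bound is a standard Chernoff-type estimate on the number of balls that ever hash to a fixed bin, which requires only the marginal uniformity (ii).

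Next I would bound the probability that a fixed labelled witness tree actually occurs. Each tree edge from a child ball $y$ to a specified bin $b$ imposes the event $b \in \{h(y,0), \ldots, h(y,d-1)\}$; by (ii) and a union bound over the $d$ positions, this has probability at most $d/n$. By (i), the corresponding events across distinct balls are mutually independent, so a tree with $T = \Theta(d^{L-L_0})$ edges occurs with probability at most $(d/n)^T$. A standard enumeration---$n$ choices for the root ball, at most $n$ for each later ball appearing in the tree, and a bounded factor per internal node for assigning children to choice slots---bounds the number of labelled witness trees by $n^{O(d^{L-L_0})}$. Multiplying and taking a union bound over possible roots, the probability that any ball achieves height $L \geq \log\log n / \log d + c\,d$ becomes $o(1)$ for an appropriate constant $c$.

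The main obstacle I anticipate is conceptual rather than computational: one must confirm that the per-edge bound $d/n$ is legitimate despite the intra-ball dependencies inherent to double hashing. This is resolved precisely by the remark above---a union bound over the $d$ choice positions of a single ball only consults the marginal distribution of each position, which (ii) supplies exactly, and never touches the joint distribution of the $d$ choices. Independence across distinct balls, supplied by (i), is what makes the tree-level event factor across edges. With these two properties in hand, the combinatorial counting and the quantitative tuning of $L$ mirror the fully random case of \cite{vocking}, yielding the stated bound $\log\log n / \log d + O(d)$ with high probability.
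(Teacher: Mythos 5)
Your central claim --- that V\"{o}cking's argument ``never requires that the $d$ choices of a single ball be mutually independent, only their marginal distributions'' --- is false, and the failure occurs exactly at the point the paper identifies as the real difficulty. In V\"{o}cking's proof the \emph{leaf-activation} event is that every one of the $d$ bins chosen by a leaf ball already has load at least $3$. Under fully random hashing this has probability at most $3^{-d}$ because the $d$ choices are independent and at most $n/3$ bins can hold $3$ or more of $n$ balls; a collection of $q$ leaf balls is then simultaneously activated with probability at most $3^{-dq}$. That per-leaf bound is \emph{not} a union bound over positions and it is \emph{not} determined by the marginal of each choice; it requires control of the joint distribution. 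Under double hashing the $d$ choices of one ball form an arithmetic progression, and if the heavy bins happen to be concentrated (e.g.\ the first $n/3$ bins all have load $\geq 3$), then a new ball with $f(j), g(j) < n/(3(d+1))$ lands all $d$ of its choices among them. The per-leaf probability can therefore be $\Omega\bigl(1/d^2\bigr)$ rather than $3^{-d}$, and the union bound over witness trees explodes. Your ``leaf bound'' --- a Chernoff estimate on how many balls ever hash to a \emph{single} fixed bin --- bounds a different event and does not supply what the tree enumeration needs, namely that a leaf ball has \emph{all} $d$ of its correlated choices hitting heavy bins. Meanwhile the per-edge bound $d/n$ that you flag as ``the main obstacle'' is in fact the easy part; it does only use marginals and goes through verbatim.

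To close the gap, the paper must (a) replace V\"{o}cking's deterministic ``at most $n/3$ heavy bins'' fact with a probabilistic leaf-activation condition (each of the $d$ chosen bins was already hit by at least $4d$ earlier balls, or some earlier ball hit two of them), and (b) prove that the \emph{fraction of $(f,g)$ pairs} that would activate a leaf is at most $3^{-d}$ with high probability throughout the process, via a Doob-martingale/Azuma concentration argument. Only after conditioning on that concentration event do the leaves, being distinct balls with independent seeds, multiply to $3^{-dq}$. The $O(d)$ additive slack in the theorem (versus V\"{o}cking's $O(1)$) comes precisely from this modified leaf condition. Your proposal contains neither the redefinition nor the concentration step, so as written it does not prove the theorem.
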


We note that, while V\"{o}cking obtains a bound of 
$\log \log n / \log d + O(1)$, we have an $O(d)$ term that appears
necessary to handle the leaves in our witness tree.  (A similar 
issue appears to arise in \cite{Woelfel}.)  For constant $d$ these
are asymptotically the same;  however, an $O(1)$ additive term is
more pleasing both theoretically and potentially in practice.  
How we deviate from V\"{o}cking's argument is explained below.

\begin{proof}
Following \cite{vocking}, we define a witness tree, which is a tree-ordered (multi)set of
balls. Each node in the tree represents a ball, inserted at a certain
time; the $i$th inserted ball corresponds to time $i$ in the natural way.  The
ball represented by the root $r$ is placed at time $t$, and a child
node must have been inserted at a time previous to its parent.  A leaf
node in V\"{o}cking's argument is {\em activated} if each of the $d$
locations of the corresponding ball contains at least three balls when
it is inserted.  An edge $(u,v)$ is activated if when $v$ is the $i$th
child of $u$, then the $i$th location of $u$'s ball is the same as one
of the locations of $v$'s ball.  A witness tree is activated if all of
its leaf nodes and edges are activated.

Following V\"{o}cking's approach, we first bound the probability that
a witness tree is activated for the simpler case where the nodes of the
witness trees represent distinct balls.  The argument then can be
generalized to deal with witness trees where the same ball may appear
multiple times.  As this follows straightforwardly using the technical
approach in \cite{vocking}, we do not provide the full argument here.

We now explain where we must deviate from V\"{o}cking's argument.  The
original argument utilizes the fact that most $n/3$ bins have load at
least 3, deterministically.  As leaf nodes
in V\"{o}cking's argument are required to have all $d$ choices of bins
have load at least 3 to be activated, a leaf node corresponding to a
ball with $d$ choices of bins is activated with probability at most
$3^{-d}$, and a collection of $q$ leaf nodes are all activated with
probability $3^{-dq}$.  However, this argument will not apply in our case, because
the choices of bins are not independent when using double hashing, and
depending on which bins are loaded, we can obtain very different
results.  For example, consider a case where the first $n/3$ bins have
load at least 3.  The fraction of choices using double hashing where
all $d$ bins have load at least 3 is significantly more than $3^{-d}$,
which would be the probability if $n/3$ bins with load 3 were randomly
distributed.  Indeed, for a newly placed ball $j$, if $f(j)$ and
$g(j)$ are both less than $n/(3(d+1))$, all $d$ choices will have load
at least 3, and this occurs with probability at least
$(9(d+1)^2)^{-1}$.  While such a configuration is unlikely, the
deterministic argument used by V\"{o}cking no longer applies.

We modify the argument to deal with this issue.  
In our double hashing setting, let us call a leaf active if either
\vspace{-0.05in}
\begin{itemize}
\item Some ball in the past has two or more of the bins at this leaf among
its $d$ choices.
\vspace{-0.1in}
\item All the $d$ bins chosen by this ball have previously been chosen by $4d$
previous balls.
\end{itemize}
\vspace{-0.05in}
The probability that any previous ball has hit two or more of the bins
at the leaf is $O(d^4n^{-1})$: there are ${d \choose 2}$ pairs of bins
from the $d$ choices at the leaf; at most $d(d-1)$ pairs of positions within the
$d$ choices where that pair could occur in any previous ball; at most $n$ possible previous
balls; and each bad choice that leads that previous ball to have a specific pair of bins in a
specific pair of positions occurs with probability $1/(n(n-1))$.  Once
we exclude this case, we can consider only balls that hit at most one
of the $d$ bins associated with the leaf.

For any time corresponding to a leaf, we bound
the probability that any specific bin has been chosen by $4d$ or more previous balls.
We note by symmetry that the probability any specific ball chooses a specific bin 
is $d/n$.  The probability in question is then at most
$${n \choose 4d}\left (\frac{d}{n} \right)^{4d} \leq
\frac{d^{4d}}{(4d)!} < \left ( \frac{e}{4} \right )^d,$$
which is less than $\frac{1}{3}$ whenever $d \geq 3$.  Further, once we consider the case of previous
balls that choose two or more bins at this leaf separately, the events
that the $d$ bins chosen by this ball have previously been chosen by
$4d$ previous balls are negatively correlated.  Hence, we find the
probability a specific leaf node is activated is less than $3^{-d}$.

However, following \cite{vocking}, we need to consider a collection of $q$ leaves
and show the probability that they are all active is at most $3^{-dq}$.
We will do this below by using Azuma's inequality to show the fraction of choices
of hash values from double hashing that lead to an activated ball is
less than $3^{-d}$ with high probability.  As balls corresponding to
leaves independently choose their hash values, this result suffices.

Let $S$ be the set of pairs of hash values that generate $d$ values
that would activate a leaf at time $n$.  We have $\E[|S|] < \left (
\frac{e}{4} \right )^d n(n-1) + cd^4(n-1)$ for some constant $c$, so
$\E[|S|] > (3^{-d} - \gamma)n(n-1)$ for some constant $\gamma$ and large
enough $n$.  Consider the Doob martingale obtained by revealing the bins
for the balls one at a time.  Each ball can change the final value of $S$
by at most $dn$, since the bin where any ball is placed is involved in 
less than $dn$ choices of pairs.  Azuma's inequality (e.g., \cite[Section 12.5]{MU})
then yields
$$\Pr(|S| > 3^{-d}n(n-1)) \leq \mbox{exp}(-\delta n)$$
for a constant $\delta$ that depends on $d$ and $\gamma$.  
It follows readily that the fraction of pairs of hash values that activate a leaf 
is at most $3^{-d}$ with very high probability throughout the process;  by conditioning
on this event, we can continue with V\"{o}cking's argument.  
(The conditioning only adds an exponentially
small additional probability to the probability the maximum load exceeds
our bound.)

Specifically, we note for there to be a bin of load $L+4d$, there must
be an activated witness tree of depth $L$.  We can bound the
probability that some witness tree (with distinct balls) of depth $L$
is activated.   The probability an edge is activated is the probability a ball
chooses a specific bin, which as previously noted is $d/n$.  As all
balls are distinct, the probability that a witness tree of $m$ balls
has all edges activated is $(d/n)^{m-1}$, and as we have shown the probability
of all leaves being activated is bounded above by $3^{-dq}$ where $q=d^L$ is the number of leaves.
Following \cite{vocking}, as there are at most $n^m$ ways of choosing the balls
for the witness tree, the probability that there exists an active witness tree 
is at most 
\begin{eqnarray*}
n^m \left ( \frac{d}{n} \right)^{m-1} 3^{-dq} & \leq & n \cdot d^{2q} \cdot 3^{-dq} \\
 & \leq & n \cdot 2^{-q} \\
 & = & n \cdot 2^{-d^{L}}.
\end{eqnarray*}
Hence choosing $L \leq \log_d \log_2 n + \log_d(1+\alpha)$ guarantees a maximum load
of $L + 4d$ with probability $O(n^{-\alpha})$. 
\end{proof}

\section{The Fluid Limit Argument}
\label{sec:fluid}

We now consider the fluid limit approach of \cite{MitzenmacherThesis}.
(A useful survey of this approach appears in \cite{Diaz}.)
The fluid limit approach gives equations that describe the asymptotic
fraction of bins with each possible integer load, and concentration
around these values follows from martingale bounds (e.g.,
\cite{EK,Kurtz,Wormald}).  Values can easily be determined
numerically, and prove highly accurate even for small numbers of balls
and bins.  We show that the same equations apply even in the setting
of double hashing, giving a theoretical justification for our
empirical findings in Appendix~\ref{sec:sims}.  This approach can be
easily extended to other multiple choice processes (such as V\"{o}cking's
scheme and the queuing setting).  We emphasize that the
fluid limit approach does not, in itself, yield bounds of
the type that the maximum load is $O(\log \log n)$ with high
probability naturally; rather, it says that for any constant integer
$i$, the fraction of bins of load $i$ is concentrated around the value
obtained by the fluid limit.  One generally has to do additional work --
generally similar in nature to the arguments in the proceeding sections -- to
obtain $O(\log \log n)$ bounds.  As we already have an $O(\log \log n)$ bound from
alternative techniques, here our focus is on showing the fluid limits are the same
under double hashing and fully random hashing, which explains our empirical findings.  
(We show one could achieve an $O(\log \log n)$ bound from the results of this section
 -- actually bound of $\log_d \log_2 n + O(1)$ -- in Appendix~\ref{sec:followon}.)

The standard balls and bins fluid limit argument 
runs as follows.  Let $X_i(t)$ be a random variable denoting the number of 
bins with load {\em at least} $i$ after $tn$ balls have been thrown;  hence $X_0(0) =n$
and $X_i(0) = 0$ for all $i \geq 1$.  Let $x_i(t) = X_i(t)/n$.  For $X_i$ to increase when
a ball is thrown, all of its choices must have load at least $i-1$, but not all of them
can have load at least $i$.  Hence for $i \geq 1$
$$\E[X_i(t + 1/n) - X_i(t)] = (x_{i-1}(t))^d - (x_{i}(t))^d.$$
Let $\Delta(x_i) = x_i(t + 1/n) - x_i(t)$ and $\Delta(t) = 1/n$.  Then the above can be written as:
$$\E \left [ \frac{\Delta(x_i)}{\Delta(t)} \right ] = (x_{i-1}(t))^d - (x_{i}(t))^d.$$
In the limit as $n$ grows, we can view the limiting version of the above equation as
$$\frac{dx_i}{dt} = x_{i-1}^d - x_{i}^d,$$
where we remove the $t$ on the right hand side as the meaning is clear.  
Again, previous works \cite{EK,Kurtz,Wormald} justify how the Markovian load balancing process
converges to the solution of the differential equations.\footnote{In particular, the technical conditions
corresponding to Wormald's result \cite[Theorem 1]{Wormald} hold, and this theorem gives the appropriate convergence;
we explain further in our Theorem~\ref{mainthm}.}  
Specifically, it follows from Wormald's theorem \cite[Theorem 1]{Wormald} that
$$X_i(t) = nx_i(t) + o(n)$$
with probability $1-o(1)$, 
or equivalently that the fraction of balls of load $i$ is within $o(1)$ of the 
result of the limiting differential equations with probability $1-o(1)$.
These equations allow us to compute
the limiting fraction of bins of each load numerically, and these results closely match our 
simulations, as for example shown in Table~\ref{table_6}.  

\begin{table*}[thp]
\centering
\begin{tabular}{|c|c|c|c|} \hline
Tail load & Fluid Limit & Fully Random & Double Hashing \\ \hline
$\geq 1$  & 0.8231 & 0.8231 & 0.8231 \\
$\geq 2$  & 0.1765 & 0.1764 & 0.1764 \\
$\geq 3$  & 0.00051 & 0.00051 & 0.00051 \\ \hline
\end{tabular}
\caption{3 choices, fluid limit ($n=\infty$) vs. $n=2^{14}$ balls and bins}
\label{table_6}
\end{table*}

Given our empirical results, it is natural to conclude that these
differential equations must also necessarily describe the behavior of
the process when we use double hashing in place of standard hashing.
The question is how can we justify this, as the equations were derived
utilizing the independence of choices, which is not the case for double hashing.

We now prove that, for constant number of choices $d$, 
constant load values $i$, and a constant time $T$ (corresponding to $Tn$ total balls), 
the loads of the bins chosen by double hashing behave essentially the same as though
the choices were independent,
in that, with high probability over the entire course of the process, 
$$\E[X_i(t + 1/n) - X_i(t)] = (x_{i-1}(t))^d - (x_{i}(t))^d +o(1);$$
that is, the gap is only in $o(1)$ terms.  
This suffices for \cite[Theorem 1]{Wormald} (specifically
condition (ii) of \cite[Theorem 1]{Wormald} allows such $o(1)$ differences).
The result is that double hashing has no effect on the fluid limit analysis.
(Again, we emphasize our restriction to constant choices $d$, constant load values $i$, 
and constant time parameter $T$.)  
Our approach is inspired by the work of Bramson, Lue, and Prabhakar
\cite{Bramson}, who use a similar approach to obtain asymptotic
independence results in the queueing setting.  However, there the concern
was on limiting independence in equilibrium with general service time
distributions, and the choices of queues were assumed to
be purely random. We show that this methodology can be applied to the double
hashing setting.

\begin{lemma}  
\label{lem:mainlemma}
When using double hashing, with high probability over the entire course of the process, 
$$\E[X_i(t + 1/n) - X_i(t)] = (x_{i-1}(t))^d - (x_{i}(t))^d +o(1).$$
\end{lemma}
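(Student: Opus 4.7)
The plan is to condition on the load vector $Y = Y(t)$ at time $t$ and prove that for each integer $j$,
$$Z_j(Y) := \Pr_{f,g}\!\left[\,\bigcap_{k=0}^{d-1}\{Y_{f+kg} \geq j\}\,\right] = x_j(t)^d + o(1)$$
with probability $1-o(1)$ over the process, where $x_j(t) = X_j(t)/n$ and the inner probability is over the hash pair $(f,g)$ of the incoming ball. Since the conditional expected increment of $X_i$ at step $t$ is precisely $Z_{i-1}(Y) - Z_i(Y)$ (the probability that the minimum load among the $d$ double-hashed choices equals $i-1$), applying this for $j = i-1$ and $j = i$ yields the lemma and verifies condition~(ii) of Wormald's theorem.

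The first ingredient is the affine symmetry of the double-hashing process: the map $b \mapsto ab+c$ with $a$ a unit mod $n$ preserves the uniform distributions of $f$ and $g$, so the joint law of $Y(t)$ is invariant under the affine group of $\mathbb{Z}_n$. Consequently $\Pr[Y_{f+kg} \geq j \text{ for all } k]$ (averaged over the process) does not depend on the specific AP, and $\E[Z_j(Y)] = \Pr[Y_{b_0} \geq j,\ldots,Y_{b_{d-1}} \geq j]$ for any $d$ distinct bins $b_0,\ldots,b_{d-1}$. The second ingredient, adapting the asymptotic-independence strategy of Bramson, Lue, and Prabhakar \cite{Bramson}, is that this $d$-variate probability factorizes into $p_j^d + o(1)$, where $p_j = \Pr[Y_b \geq j]$. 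The key quantitative input is the pairwise uniformity of double hashing recorded in Section~1: any single ball's $d$ hash values include a specified pair of bins with probability $O(d^2/n^2)$, so over the $O(n)$ balls inserted by any constant time horizon $T$ the expected number of balls whose choices hit two or more of the $d$ specified bins is $O(d^4/n) = o(1)$. On this high-probability ``no-shared-ball'' event, the histories driving the $d$ bin loads are fed by disjoint sets of balls, and a propagation-of-chaos argument inductive in time shows that both the marginals equal $p_j + o(1)$ and the joint distribution factorizes up to $o(1)$ additive error.

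To lift the expectation bound $\E[Z_j(Y)] = p_j^d + o(1)$ to a with-high-probability statement, I apply standard concentration. A Doob martingale exposing the hash pairs one ball at a time has bounded differences of order $d$, so $x_j(t)$ concentrates around $p_j$, giving $x_j(t)^d = p_j^d + o(1)$ whp. Similarly, computing $\E[Z_j(Y)^2]$ via the affine symmetry as the probability that two random APs both lie in $H_j$, and using that two such APs are generically disjoint so the asymptotic-independence argument applies to $2d$ bins, gives $\E[Z_j(Y)^2] = p_j^{2d} + o(1)$, hence $Z_j(Y) \to p_j^d$ in probability. Combining, $Z_j(Y) = x_j(t)^d + o(1)$ whp uniformly over the constant time horizon by a union bound, and Wormald's theorem then closes the argument. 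The main obstacle is the asymptotic-independence step: eliminating direct ball-sharing between the $d$ specified bins is a routine counting argument via pairwise uniformity, but the bins remain indirectly coupled through the loads of the alternative choices of balls that hit one of them, and handling this indirect coupling is where the BLP-style inductive-in-time propagation-of-chaos argument, tuned to the fluid-limit ODE, is essential.
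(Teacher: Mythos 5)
Your overall plan is sound—condition on the history, show the joint probability over the $d$ double‑hashed choices factorizes, and feed the resulting drift estimate into condition~(ii) of Wormald's theorem—and your affine‑symmetry observation (that the law of the load vector is invariant under $b \mapsto ab + c$ for units $a$, because this maps double‑hash APs to double‑hash APs) is a nice touch that the paper does not use. But there is a concrete gap exactly where you wave at the ``BLP‑style propagation‑of‑chaos argument,'' and in fact one of your intermediate claims is false as stated.

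You assert that on the ``no‑shared‑ball'' event—no past ball hitting two or more of the $d$ specified bins—``the histories driving the $d$ bin loads are fed by disjoint sets of balls.'' That is not true. Rule out a ball that directly hits $b_1$ and $b_2$, and you can still have a ball $z$ with $b_1$ among its choices and another ball $z'$ with $b_2$ among its choices, such that $z$ and $z'$ share a secondary bin $c$; then $c$ (and everything feeding $c$) is in the recursive history of both $b_1$ and $b_2$. The load of $b_1$ depends on whether $z$ actually landed in $b_1$, which depends on the load of $c$ at time of $z$'s arrival; likewise for $b_2$ and $z'$. Controlling only direct sharing leaves this coupling intact, and this recursive spread is exactly the ``indirect coupling'' you flag at the end but do not resolve. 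The paper resolves it by introducing the full \emph{ancestry list} of a bin (the recursively collected balls and alternative bins), proving via a branching‑process comparison that every ancestry list has only $O(\log n)$ bins with high probability (Lemma~\ref{lem:branching}), and then showing the $d$ ancestry lists are disjoint with probability $1 - O(d^2\log^2 n / n)$ (Lemma~\ref{lem:small}). Disjointness of full ancestries is what gives exact factorization of $\Pr(\cap_j E_j)$; your $O(d^4/n)$ bound is the analogue of only the first, easy step of Lemma~\ref{lem:small} (the paper's event $\mathcal{E}$). Without a size bound on the recursive dependency structure, you cannot bound the probability of a deeper‑level collision, and the propagation‑of‑chaos step you invoke is precisely where that work must go. Your second‑moment/Doob‑martingale concentration for $Z_j(Y)$ is plausible machinery once the mean is identified, but it inherits the same gap: computing $\E[Z_j(Y)]$ and $\E[Z_j(Y)^2]$ both require the factorization statement you have not proved.
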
  
\begin{proof}  
We refer to the {\em ancestry list} of a bin $b$ at time $t$ as
follows.  The list begins with the balls $z_1,z_2,\ldots,z_{g(b,t)}$
that have had bin $b$ as one of their choices, where $g(b,t)$ is
the number of balls that have chosen bin $b$ up to time $t$.  Note
that each $z_i$ is associated with a corresponding time $t_i$ and
$d-1$ other bin choices.  For each $z_i$, we recursively add the list
of balls that have chosen each of those $d-1$ bins up to time $t_i$, and so on recursively.  
We also think of the bins associated with these balls as being part of the ancestry
list, where the meaning is clear.  It is
clear that the ancestry list gives all the necessary information to
determine the load of the bin $b$ at time $t$ (assuming the
information regarding choices is presented in such a way to include
how placement will occur in case of ties; e.g., the bin choices are
ordered by priority).  We note that the ancestry list holds more information
(and more balls and bins) than the witness trees used by V\"{o}cking (and by us
in Section~\ref{sec:witness}).

In what follows below let us assume $n$ is prime for convenience (we
explain the difference if $n$ is not prime in footnotes).  We claim
that for asymptotic independence of the load among a collection of $d$
bins at a specific time when a new ball is placed, it suffices to show
that these ancestry lists are small.  Specifically, we start with
showing in Lemma~\ref{lem:branching} that all ancestry lists contain
only $O(\log n)$ associated bins with high probability.  We then show
as a consequence in Lemma~\ref{lem:small} that the ancestry lists of
the bins associated with a newly placed ball have no bins in common
with high probability.  This last fact allows us to complete the main
lemma, Lemma~\ref{lem:mainlemma}.

\begin{lemma}
\label{lem:branching}
The number of bins in the ancestry list of every bin after the first $Tn$ steps is at most 
$O(\log n)$ with high probability.
\end{lemma}
\begin{proof}
We view the growth of the ancestry list as a variation of the standard branching process, by going
backward in time.  Let $B_{0} = 1$ correspond to size of an initial ancestry list of a bin $b$, consisting
of the bin itself.  If the $(Tn)$th ball thrown has $b$ as one of its $d$ choices, then $d-1$ additional bins
are added to the ancestry list, and we then have $B_{1} = d$;  otherwise we have no change and $B_{1} = 1$.   
(Note that when measuring the size of the ancestry list in bins, each bin is counted only once,
even if it is associated with multiple balls.)  
If the $(Tn-1)$st ball thrown has a bin in the ancestry list as one of its $d$ choices, then (at most) $d-1$
bins are added to the ancestry list, and we set $B_2 = B_1 + d-1$;  otherwise, we have $B_2 = B_1$.  We
continue to add to the ancestry list with at each step $B_i = B_{i-1} + d-1$ or $B_i = B_{i-1}$, depending
on whether the $(Tn-i+1)$st ball has one of it choices as a bin on the ancestry list, or not.  

This process is {\em almost} equivalent to a Galton-Watson branching process where
in each generation, each existing element produces 1 offspring with
probability $1-d/n$ (or equivalently, moves itself into the next
generation), or produces $d$ offspring (adding $d-1$ new elements)
with probability $d/n$.  The one issue is that the production of
offspring are not independent events; at most $d-1$ elements are added
at each step in the process.  (There is also the issue that perhaps
fewer than $d-1$ elements are added when elements are added to the
ancestry list; for our purposes, it is pessimistic to assume $d-1$
offspring are produced.)  Without this dependence concern, standard
results on branching process would give that $E[B_{Tn}] =
(1+d(d-1)/n)^{Tn} \leq e^{Td(d-1)}$, which is a constant.  Further, we could apply
(Chernoff-like) tail bounds from Karp and Zhang \cite[Theorem 1]{KZ}, which states the following:
for a supercritical finite time branching process $\left \{ Z_n \right \}$ over $n$ time steps starting with $Z_0 =1$,
with mean offspring per element $\E[Z_1] = \rho >1$, and with $\E[e^{Z_1}] < \infty$, 
there exists constants
$c_1$ and $c_2$ such that
$$\Pr(Z_n > \gamma \rho^n) < c_1 e^{-c_2 \gamma}.$$
In our setting, that would give that
there exists constants
$c_1$ and $c_2$ such that
$$\Pr(B_{Tn} > \gamma (1+d(d-1)/n)^{Tn} ) < c_1 e^{-c_2 \gamma}.$$
This would give our desired $O(\log n)$ high probability bound on the size of the ancestry list.

To deal with this small deviation, it suffices to consider a modified Galton-Watson process
where each element produces $d$ offspring with probability $d'/n$;  we shall see that $d'= d+1$
suffices.  Let $B'$ be the resulting size of this Galton Walton process.  From the above
we have that $B' < c \log n$ with high probability for some suitable constant $c$.    

Our original desired ancestry list process is dominated by a process where $B_i =  \min(B_{i-1} + d-1,n)$ with probability $\min(B_{i-1}d/n,1)$ and $B_i = B_{i-1}$ otherwise, and this process is in turn dominated for values of $B_i$
up to $c \log n$ by a Galton-Waston branching process  where the constant $d'$ satisfies
$$1-(1-d'/n)^x \geq dx/n$$
for all $1 \leq x \leq c \log n$, so that at every stage the Galton-Watson process is more likely to have at least $d-1$ new offspring (and may have more).  
We see $d' = d+1$ suffices, as $$1-(1-(d+1)/n)^x = x(d+1)/n - O(dx^2/n^2)$$
which is greater than $dx/n$ for $n$ sufficiently large when $x$ is $O(\log n)$.  
The straightforward step by step coupling of the processes yields that 
$$\Pr(B_{Tn} > c \log n) \leq \Pr(B' > c \log n),$$ giving our desired bound. 
 

We also suggest a slightly cleaner alternative, which may prove useful
for other variations: embed the branching process in a continuous time
branching process.  We scale time so that balls are thrown as a
Poisson process or rate $n$ per unit time over $T$ time units.  Each element therefore
generates $d-1$ new offspring at time instants that are exponentially
distributed with mean $1/d$ (the average time before a ball hits any bin
on the ancestry list).  Again, assuming $d-1$ new offspring is a pessimistic
bound. If we let $C_t$ be the number of elements at time $t$ (starting from 1 element at time 0),
it is well known (see, e.g., \cite[p.108 eq. (4)]{AN}, and note that generating $d-1$
new offspring is equivalent to ``dieing'' and generating $d$ offspring) that for such a process, 
$$\E[C_t] = e^{td(d-1)}.$$
In our case, we run to a fixed time $T$ and $\E[C_T] = e^{Td(d-1)}$, a constant.
Indeed, in this specific case, the generating function for the distribution
of the number of elements is known (see, e.g., \cite[p.109]{AN}), allowing us to directly apply a Chernoff bound.
Specifically, 
$$\E[s^{C_t}] = se^{-dt}[1-(1-e^{-d(d-1)t})s^{d-1}]^{-1/(d-1)}.$$
Hence we have
\begin{eqnarray*}
\Pr(C_T > \gamma e^{Td(d-1)}) & = & \Pr(e^{C_T} > e^{\gamma e^{Td(d-1)}}) \\
& \leq & e^{-\gamma e^{Td(d-1)}} \E[e^{C_T}] \\
& \leq & c_3 e^{-c_4 \gamma}
\end{eqnarray*}
for constants $c_3$ and $c_4$ that depend on $d$ and $T$.  Hence, this gives that
the size of the ancestry list as viewed from the setting of the continuous branching
process is $O(\log n)$ with high probability.  

The last concern is that running the continuous process for time $Tn$ does not guarantee that 
$Tn$ balls are thrown; this can be dealt with by thinking of the process running
for a slightly longer time $T' > T$.  That is, choose $T'= T +\epsilon$ for a
small constant $\epsilon$.  Standard Chernoff bounds on the
Poisson random variables then guarantee that at least $Tn$ balls are then thrown with high probability,
and the size of the ancestry lists are stochastically monotonically increasing with the number
of balls thrown.  Changing to $T'$ time units maintains that each ancestry list is $O(\log n)$
with high probability.  

Finally, by choosing the constant in the $O(\log n)$ term appropriately, we
can achieve a high enough probability to apply a union bound so that this holds for all
ancestry lists simultaneously with high probability.
\end{proof}

We now use Lemma~\ref{lem:branching} to show the following.
\begin{lemma}
\label{lem:small}
The bins in the ancestry lists of the $d$ choices are disjoint with probability 
$1-\eta$ for $\eta = O(d^2 \log^2 n/n) = o(1)$.
\end{lemma}
\begin{proof} Let {\cal F} be the probability that the bins are disjoint, and
let ${\cal E}$ be the event that no pair of the $d$ choices were previously
chosen by the same ball.  If ${\cal E}$ occurs, the ancestry lists are clearly not disjoint.
Hence we wish to bound 
$$\Pr(F) \leq \Pr({\cal E}) + \Pr({\cal F} | \neg{\cal E}).$$

Consider any two of the $d$ bins chosen by the ball being placed.
Each of the up to $Tn$ previous balls have $O(d^2)$ ways of choosing those two bins
as two of their $d$ choices (e.g., picking that bin as the 2nd and 4th
choice, for example), and the probability of choosing those two bins
for each possible pair of choice positions is $O(1/n^2)$.\footnote{If
$n$ is not prime, this probability is $O(1/n\phi(n))$, where $\phi$
is the Euler totient function counting the number of numbers less
than $n$ that are relatively prime to $n$.  We note $\phi(n)$ is
usually $\Omega(n)$ and is always $\Omega(n/\log \log n)$, so this
does not affect our argument substantially.}  There are ${d \choose 2}$
pairs of balls, so by a union bound
$\Pr({\cal E})$ is $O(Td^4/n^2)$.  

Now suppose that no pair of the $d$ bins were previously chosen by the
same ball.  Suppose the bins
for each of the ancestry lists of the $d$ choices are
ordered in some fixed fashion (say according to decreasing ball time, randomly
permuted for each ball).  We consider the probability that the $i$th
bin in the ancestry list of one bin matches the $j$th bin in another. 
Since the lists do not share any ball in common,
the $j$th bin in the second list matches the $i$th bin in the first list 
with probability only $O(1/n)$, as even conditioned on the value of the $i$th bin
on the first list, the $j$th bin on the second list is uniform over $\Omega(n)$ possibilities.\footnote{Again,
for $n$ not prime, we may use $\Omega(\phi(n))$ possibilities.}
We now condition on all of the $d$ ancestry lists being of size $O(\log n)$;  from
Lemma~\ref{lem:branching}, this can be made to occur with any inverse polynomial probability
by choosing the constant factor in the $O(\log n)$ term, so we assume this bound on 
ancestry list sizes.  In his case,
the probability of a match among any of the $d$ bins is only $O(d^2 \log^2 n/n)$ in total, 
where the $d^2$ factor is from the $d \choose 2$ possible ways of choosing bins, and the $\log^2 n$
term follows the bound on the size ancestry lists.  Hence 
$\Pr({\cal F} | \neg{\cal E})$ is $O(d^2 \log^2 n/n)$, and
the total probability
that the ancestry lists of the $d$ choices are {\em not} disjoint is $\eta = O(d^2 \log^2 n/n) = o(1)$.
\end{proof}

We now show that this yields the Lemma~\ref{lem:mainlemma}.  
To clarify this, consider bins $b_1,b_2,\ldots,b_d$ that were chosen by a
ball at some time $t+1/n$.  (Recall our scaling of time.)  The
probability that all $d$ bins have load at least $i$ at that time is
equivalent to the probability that each bin $b_j$ has
a corresponding ancestry list $A_j$ showing that it has load $i$ at some time
$u_j \leq t$.  Fix a collection of ancestry lists $A_j$, and let 
$E_j$ be the event defined by ``bin $b_j$ has ancestry list $A_j$''.  
If these ancestry lists have disjoint sets of bins,
then the corresponding balls in each ancestry list occur at different times
and have no intersecting bins, and as such
$$ \Pr \left (  \cap_j E_j \right ) = \prod_j \Pr(E_j).$$
For constant $i$, $t$, and $d$, the probability that all $d$ bins have load at 
least $i$ is constant.  Hence, if the probability that the ancestry lists for
the $d$ bins intersect at any bin is $\eta = o(1)$, we have asymptotic independence.
Specifically, let $\cal X$ be the set of collections of $d$ ancestry lists 
for balls $b_1,b_2,\ldots,b_d$ that yield that each bin has load at least $i$ at time $t$, let $\cal Y$ be the subset of collections in $\cal X$ where the $d$ ancestry lists have no bins
in common, and for a collection $Z$ in $\cal X$ let $E_j(Z)$ be the corresponding
event defined by ``bin $b_j$ has ancestry list $A_j$ in collection $Z$''.  
Then
\begin{eqnarray*}
\sum_{Z \in {\cal X}} \Pr \left (  \cap_j E_j(Z) \right ) &= & \left [ \sum_{Z \in {\cal Y}} \Pr \left (  \cap_j E_j(Z) \right ) \right ] + o(1) \\
& = & \sum_{Z \in {\cal Y}} \left (\prod_j \Pr E_j(Z) \right ) + o(1) \\
& = & \sum_{Z \in {\cal X}} \left (\prod_j \Pr E_j(Z) \right ) + o(1). 
\end{eqnarray*}
Here the first line uses that the $d$ ancestry lists intersect somewhere with probability
$o(1)$;  the second lines uses that for ancestry lists in  $\cal Y$ we probability of the
intersection is the product of the probabilities;  and the third line is again because the
the collections $Z$ in ${\cal X} - {\cal Y}$ have total probability $o(1)$.
Hence up to an $o(1)$ term, the behavior is the same as if the $d$ choices were independent
(with respect to all bins having load at least $i$).  
Thus $$\E[X_i(t + 1/n) - X_i(t)] = (x_{i-1}(t))^d - (x_{i}(t))^d +o(1)$$
as needed.
\end{proof}

As a result of Lemma~\ref{lem:mainlemma}, we have the following theorem, generalizing the differential equations
approach for balanced allocations to the setting of double hashing.
\begin{theorem}
\label{mainthm}
Let $i$, $d$, and $T$ be constants.  Suppose $Tn$ balls are
sequentially thrown into $n$ bins with each ball having $d$ choices
obtained from double hashing and each ball being placed in the least
loaded bin (ties broken randomly).  Let $X_i(T)$ be the number of
bins of load at least $i$ after the balls are thrown.  Let $x_i(t)$
be determined by the family of differential equations 
$$\frac{dx_i}{dt} = x_{i-1}^d - x_{i}^d,$$
where $x_0(t) = 1$ for all time and $x_i(0) = 0$ for $i \geq 1$.  
Then with probability $1-o(1)$, 
$$\frac{X_i(T)}{n} = x_i(T) + o(1).$$
\end{theorem}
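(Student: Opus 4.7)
The plan is to deduce Theorem~\ref{mainthm} from Lemma~\ref{lem:mainlemma} by invoking Wormald's theorem~\cite[Theorem 1]{Wormald}, which is the standard tool for showing that a suitably well-behaved Markov process tracks the solution of its associated family of differential equations up to $o(1)$ error with probability $1-o(1)$. Wormald's theorem has three conditions to verify: a boundedness condition on the one-step increments, a trend hypothesis equating the expected one-step change to the right-hand side of the differential equations up to $o(1)$ error, and a Lipschitz condition on the right-hand sides. Lemma~\ref{lem:mainlemma} directly supplies the trend hypothesis, so the bulk of the remaining work is to check the other two conditions and set up the state space so that the theorem applies.

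First I would make the state space finite. Although the system of equations $dx_i/dt = x_{i-1}^d - x_i^d$ is nominally infinite, for any constant $T$ the maximum load after $Tn$ balls is $O(\log \log n)$ with high probability by Theorem~\ref{thm:vresult}. Choosing a cutoff $L = L(n) = \Theta(\log \log n)$ that exceeds this maximum-load bound with probability $1 - o(1)$, I would track only $X_0, X_1, \ldots, X_L$; on the high-probability event that no bin ever reaches load $L$, the evolution of $(X_0, \ldots, X_{L-1})$ is exactly as in the full process, and conditioning on this event only contributes $o(1)$ error to every statement of the form $\Pr(X_i(T)/n = x_i(T) + o(1))$.

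Next I would verify the remaining hypotheses of Wormald's theorem on this truncated state space. Boundedness is immediate: throwing a single ball changes each $X_i$ by at most $1$, so the one-step change has absolute value bounded by a constant, well within Wormald's allowed $n^{1/5}$-type bound. For the Lipschitz condition, the right-hand sides $f_i(x_0, \ldots, x_L) = x_{i-1}^d - x_i^d$ are polynomials of bounded degree $d$ in variables lying in $[0,1]$, so their partial derivatives are uniformly bounded and hence $f_i$ is Lipschitz on $[0,1]^{L+1}$ with a constant depending only on $d$. Initial conditions $x_0(0) = 1$ and $x_i(0) = 0$ for $i \geq 1$ match the starting state of the process exactly. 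Invoking Wormald's theorem then yields $X_i(T)/n = x_i(T) + o(1)$ with probability $1-o(1)$ on the conditioned event.

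The main obstacle, and what I expect requires the most care, is the interaction between the conditioning on small ancestry lists (used in Lemma~\ref{lem:mainlemma}) and the conditioning on the maximum-load cutoff used to truncate the system. Both events hold with high probability, so a union bound handles this, but one must check that Lemma~\ref{lem:mainlemma}'s $o(1)$ trend-error holds \emph{uniformly} over every step $t \in [0,T]$ with high probability, rather than only at a single step. This is where the careful choice of the $O(\log n)$ constant in Lemma~\ref{lem:branching}, together with a union bound over all $Tn$ insertion steps, pays off: it gives a single high-probability event on which the trend hypothesis holds at every step, which is exactly what Wormald's theorem needs. Combining all of this delivers the stated conclusion.
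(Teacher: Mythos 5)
Your proof is correct and takes essentially the same route as the paper: invoke Wormald's theorem, supply the trend hypothesis from Lemma~\ref{lem:mainlemma}, and check the boundedness and Lipschitz conditions directly. The one unnecessary detour is the truncation at $L = \Theta(\log\log n)$ via Theorem~\ref{thm:vresult} --- since the system $dx_j/dt = x_{j-1}^d - x_j^d$ is triangular, the equations for $x_0,\ldots,x_i$ already form a closed finite system for the fixed constant $i$ under consideration, so no maximum-load cutoff is needed to apply Wormald's theorem.
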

\begin{proof}
This follows from the fact that $$\E[X_i(t + 1/n) - X_i(t)] = (x_{i-1}(t))^d - (x_{i}(t))^d +o(1),$$
and applying Wormald's result \cite[Theorem 1]{Wormald}.

We remark that Theorem 1 of \cite{Wormald} includes other technical
conditions that we briefly consider here.  The first condition is that $|X_i(t + 1/n) - X_i(t)|$ is bounded by a constant;  
all such values here are bounded by 1.  The second (and only challenging) condition exactly corresponds to our statement that
$\E[X_i(t + 1/n) - X_i(t)] = (x_{i-1}(t))^d - (x_{i}(t))^d +o(1)$ over the course of the process.  The third condition is
our functions on the right hand side, that is $(x_{i-1}(t))^d - (x_{i}(t))^d$, are continuous and satisfy a Lipschitz
condition on an open neighborhood containing the path of the process.  These functions are continuous on the domain
where all $x_i \in [0,1]$ up to the value $i$ being considered, and they satisfy
the Lipschitz condition as
\begin{eqnarray*}
|(x_{i-1}(t))^d - (x_{i}(t))^d| \! \! & \leq & \! \! |x_{i-1}(t) - x_{i}(t)| \sum_{j=0}^{d-1} (x_{i-1}(t))^{j}(x_i(t))^{d-1-j} \\
\! \! & \leq & \! \! d|(x_{i-1}(t)) - (x_{i}(t))|,
\end{eqnarray*}
taking note that all $x_i,x_{i-1}$ values are in the interval $[0,1]$.  Hence the conditions for Wormald's theorem are met.
\end{proof}

The following corollary, based on the known fact that the result of
Theorem~\ref{mainthm} also holds in the setting of fully random
hashing \cite{MitzenmacherThesis}, states that the difference between
fully random hashing and double hashing is vanishing.
\begin{corollary}
\label{cormain}
Let $i$, $d$, and $T$ be constants.  Consider two processes, where in each $Tn$ balls are
sequentially thrown into $n$ bins with each ball having $d$ choices
and each ball being placed in the least loaded bin (ties broken randomly),
In one process, the $d$ choices are fully random;  in the other, the $d$ choices 
are made by double hashing.  Then with probability $1-o(1)$, the fraction of bins
with load $i$ differ by an $o(1)$ additive term.
\end{corollary}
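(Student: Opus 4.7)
The plan is to deduce the corollary directly from Theorem~\ref{mainthm} together with the already-known fluid-limit result for fully random hashing, essentially by the triangle inequality. First, I would invoke Theorem~\ref{mainthm}: for constant $i$, $d$, and $T$, with probability $1-o(1)$ the double-hashing process satisfies $X_i^{\text{DH}}(T)/n = x_i(T) + o(1)$, where $x_i(t)$ is the solution of the family of differential equations $dx_i/dt = x_{i-1}^d - x_i^d$ with the stated initial conditions. Second, I would quote the corresponding result of \cite{MitzenmacherThesis} (whose hypotheses, being the strictly easier independent-choices case, are immediate) to obtain $X_i^{\text{FR}}(T)/n = x_i(T) + o(1)$ with probability $1-o(1)$ for the fully random process. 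Since both processes are driven by the \emph{same} limiting ODE system, a union bound over the two $1-o(1)$ events yields, with probability $1-o(1)$,
\[
\left| \frac{X_i^{\text{DH}}(T)}{n} - \frac{X_i^{\text{FR}}(T)}{n} \right| = o(1).
\]

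Since the corollary is phrased in terms of the fraction of bins with load \emph{exactly} $i$ rather than at least $i$, I would then convert between the two. The fraction of bins with load exactly $i$ in either process equals the tail fraction at level $i$ minus the tail fraction at level $i+1$; applying the previous step at both levels $i$ and $i+1$ (again with a union bound over the four associated good events) gives
\[
\left| \frac{X_i^{\text{DH}}(T) - X_{i+1}^{\text{DH}}(T)}{n} - \frac{X_i^{\text{FR}}(T) - X_{i+1}^{\text{FR}}(T)}{n} \right| = o(1)
\]
with probability $1 - o(1)$, which is exactly the stated conclusion.

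There is essentially no real obstacle; the work has been done in Theorem~\ref{mainthm}. The only small thing to be a little careful about is that the theorem was stated for loads at least $i$, so one has to note that applying it at level $i+1$ (also a constant) is legitimate, and that taking the difference of two $o(1)$ approximations is still $o(1)$. I would close by remarking that the strength of the statement comes from Theorem~\ref{mainthm}, not from this corollary, which is just the restatement of that theorem as an indistinguishability result between the two hashing schemes.
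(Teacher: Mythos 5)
Your proposal is correct and matches the paper's intended reasoning: the corollary is stated as a direct consequence of Theorem~\ref{mainthm} together with the known fluid-limit result of \cite{MitzenmacherThesis} for fully random choices, combined by exactly the triangle-inequality and union-bound argument you describe. Your additional care in converting from tail fractions (load at least $i$) to exact-load fractions by differencing at levels $i$ and $i+1$ is a sensible and harmless refinement of the same route.
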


Given the results for the differential equations, it is perhaps unsurprising that one can use these methods
to obtain, for example, a maximum load of $\log \log n/ \log d+ O(1)$ maximum load for $n$ balls in $n$ bins,
using the related layered induction approach of \cite{ABKU}.  While we suggest this is not the main point (given Theorem~\ref{thm:vresult}), we provide further details in Appendix~\ref{sec:followon}.

\section{Conclusion}
We have first demonstrated empirically that using double hashing with
balanced allocation processes (e.g., the power of (more than) two
choices), surprisingly, does not noticeably change performance when
compared with fully random hashing.  We have then shown that previous
methods can readily provide $O(\log \log n)$ bounds for this approach.
However, explaining why the fraction of bins of load $k$ for each $k$
appears the same requires revisiting the fluid limit model for such
processes.  We have shown, interestingly, that the same family of
differential equations applies for the limiting process.  Our argument
should extend naturally to other similar processes;  for example, 
the analysis can similarly be made to apply in a straightforward fashion for the 
differential equations for V\"{o}cking's $d$-left scheme \cite{MV}.

This opens the door to the interesting possibility that double hashing
can be suitable for other problem or analyses where this type of fluid
limit analysis applies, such as low-density parity-check codes
\cite{LMSS}.  Here, however, the asymptotic independence required
was aided by the fact that we were looking at the history of the process,
allowing us to tie the ancestry lists to a corresponding branching process.
Whether similar asymptotic independence can be derived for other problems remains
to be seen.  For other problems, such as cuckoo hashing, the
fluid limit analysis, while an important step, may not offer a
complete analysis.  Even for load balancing problems, fluid limits do not
straightforwardly apply for the heavily loaded case where the number of 
balls is superlinear in the number of bins \cite{Steger}, and it is unclear
how double hashing performs in that setting.  So again, determining more generally where double
hashing can be used in place of fully random hashing without
significantly changing performance may offer challenging future questions.

\section*{Acknowledgments}  The author thanks George Varghese for the discussions which led to the
formulation of this problem, and thanks Justin
Thaler for both helpful conversations and offering several suggestions for improving the presentation of results.  

\bibliographystyle{plain}

\appendix

\section{Empirical Results}
\label{sec:sims}

We have done extensive simulations to test whether using double hashing
in place of idealized random hashing makes a difference for several
multiple choice schemes.  Theoretically, of course, there is some
difference; for example, the probability that $k$ balls choose the
same specified set of $d$ bins is $O(n^{-dk})$ with fully random choices, and only
$O(n^{-2k})$ with double hashing (where the order notation may hide
factors that depend on $d$).  Hence, to be clear, the best we can hope
for are differences up to $o(1)$ events.  Empirically, however, our
experiments suggest the effects on the distribution of the loads, or
in particular on the probability the maximum load exceeds some value,
are all found deeply in the lower order terms. Experiments show that
unless especially rare events are of special concern, we expect the
two to perform similarly.  

\subsection{The Standard $d$-Choice Scheme}

We first consider $n$ balls and bins using $d$ choices without
replacement, comparing fully random choices with
double hashing.\footnote{We also considered $d$ choices with
replacement, but the difference was not apparent except for very
small $n$, so we present only results without replacement.  However, we
note that conversations with George Varghese regarding hardware
settings with small $n$ originally motivated our examination of this
approach.}  When using double hashing we choose an odd stride value
as explained previously.  All results presented are over 10000 trials.
Table~\ref{table_1} shows the distributions
of bin loads for 3 and 4 choices, averaged over all 10000 trials, for
$n=2^{16}$ and $n=2^{18}$.  (Recall $n = 2^{14}$ was shown in
Table~\ref{table_example}.)  As can be seen, the deviations are all
very small, within standard sampling error.

\begin{table*}[thp]
\begin{subfigure}[h]{0.45\textwidth}
\centering
\begin{tabular}{|c|c|c|} \hline
Load & Fully Random & Double Hashing \\ \hline
0 & 0.17695 & 0.17693 \\
1 & 0.64661 & 0.64664 \\
2 & 0.17593 & 0.17592 \\
3 & 0.00051 & 0.00051 \\ \hline
\end{tabular}
\caption{3 choices, $n=2^{16}$ balls and bins}
\end{subfigure}
\qquad
\begin{subfigure}[h]{0.45\textwidth}
\centering
\begin{tabular}{|c|c|c|} \hline
Load & Fully Random & Double Hashing \\ \hline
0 & 0.14081 & 0.14083 \\
1 & 0.71841 & 0.71835 \\
2 & 0.14076 & 0.14079 \\
3 & $2.32\cdot 10^{-5}$ & $2.30\cdot 10^{-5}$ \\ \hline
\end{tabular}
\caption{4 choices, $n=2^{16}$ balls and bins}
\end{subfigure}
\\
\begin{subfigure}[h]{0.45\textwidth}
\centering
\begin{tabular}{|c|c|c|} \hline
Load & Fully Random & Double Hashing \\ \hline
0 & 0.17696 & 0.17696 \\
1 & 0.64658 & 0.64648 \\
2 & 0.17595 & 0.17595 \\
3 & 0.00051 & 0.00051 \\ \hline
\end{tabular}
\smallskip
\caption{3 choices, $n=2^{18}$ balls and bins}
\end{subfigure}
\qquad
\begin{subfigure}[h]{0.45\textwidth}
\centering
\begin{tabular}{|c|c|c|} \hline
Load & Fully Random & Double Hashing \\ \hline
0 & 0.14083 & 0.14082 \\
1 & 0.71837 & 0.71838 \\
2 & 0.14078 & 0.14078 \\
3 & $2.31\cdot 10^{-5}$ & $2.32\cdot 10^{-5}$ \\ \hline
\end{tabular}
\caption{4 choices, $n=2^{18}$ balls and bins}
\end{subfigure}
\caption{Essentially indistinguishable differences in simulation between double hashing and fully random hashing. \label{table_1}}
\end{table*}

\begin{table*}[thp]
\begin{subfigure}[h]{0.45\textwidth}
\centering
\begin{tabular}{|c|c|c|} \hline
$n$ & Fully Random & Double Hashing \\ \hline
$2^{10}$ & 39.78 & 39.40 \\
$2^{11}$ & 64.71 & 65.15 \\
$2^{12}$ & 86.90 & 87.05 \\
$2^{13}$ & 98.37 & 98.63 \\ 
$2^{14}$ & 100.00 & 99.99 \\ 
$2^{15}$ & 100.00 & 100.00 \\ \hline
\end{tabular}
\caption{3 choices, fraction with maximum load 3}
\end{subfigure}
\qquad
\begin{subfigure}[h]{0.45\textwidth}
\centering
\begin{tabular}{|c|c|c|} \hline
$n$ & Fully Random & Double Hashing \\ \hline
$2^{10}$ & 2.24 & 2.23 \\
$2^{12}$ & 8.91 & 8.52 \\
$2^{14}$ & 30.75 & 31.42 \\
$2^{16}$ & 78.23 & 77.72 \\ 
$2^{18}$ & 99.77 & 99.79 \\ 
$2^{20}$ & 100.00 & 100.00 \\ \hline
\end{tabular}
\caption{4 choices, fraction with maximum load 3}
\end{subfigure}
\caption{Comparing maximum loads.  The fraction of runs with maximum load 3 is similar. \label{table_2}
}
\end{table*}

\begin{table*}[thp]
\begin{subfigure}[h]{0.45\textwidth}
\centering
\begin{tabular}{|c|c|c|c|c|} \hline
Load & min & avg & max & std.dev. \\ \hline
0 & 36522 & 36913.75 & 37308 & 111.06 \\
1 & 187533 & 188322.55 & 189103 & 222.02 \\
2 & 36516 & 36901.67 & 37298 & 110.96 \\
3 & 1 & 6.04 & 17 & 2.42 \\ \hline
\end{tabular}
\caption{Fully random, load distribution over 10000 trials}
\end{subfigure}
\qquad
\begin{subfigure}[h]{0.45\textwidth}
\centering
\begin{tabular}{|c|c|c|c|c|} \hline
Load & min & avg & max & std.dev. \\ \hline
0 & 36535 & 36916.57 & 37301 & 109.89 \\
1 & 187544 & 188316.93 & 189078 & 219.71 \\
2 & 36524 & 36904.45 & 37297 & 109.85 \\
3 & 1 & 6.06 & 18 & 2.44 \\ \hline
\end{tabular}
\caption{Double hashing, load distribution over 10000 trials}
\end{subfigure}
\caption{Viewing the sample standard deviation, 4 choices, $2^{18}$ balls and $2^{18}$ bins. \label{table_2ba}
}
\end{table*}

\begin{table*}[thp]
\begin{subfigure}[h]{0.45\textwidth}
\centering
\begin{tabular}{|c|c|c|} \hline
Load & Fully Random & Double Hashing \\ \hline
9 &  $6.10 \cdot 10^{-9}$ & $6.10 \cdot 10^{-9}$ \\
10 & $1.28 \cdot 10^{-7}$ & $1.71 \cdot 10^{-7}$ \\
11 & $2.50 \cdot 10^{-6}$ & $2.95 \cdot 10^{-6}$ \\
12 & $4.54 \cdot 10^{-5}$ & $4.51 \cdot 10^{-5}$ \\
13 & 0.00076    & 0.00076    \\
14 & 0.01254    & 0.01254    \\
15 & 0.16885    & 0.16877    \\
16 & 0.62220    & 0.62234    \\
17 & 0.19482    & 0.19475    \\
18 & 0.00079    & 0.00079    \\ \hline
\end{tabular}
\caption{3 choices, $2^{18}$ balls and $2^{14}$ bins}
\end{subfigure}
\qquad
\begin{subfigure}[h]{0.45\textwidth}
\centering
\begin{tabular}{|c|c|c|} \hline
Load & Fully Random & Double Hashing \\ \hline
11 & $2.44 \cdot 10^{-8}$ &  $2.44 \cdot 10^{-8}$ \\
12 & $1.48 \cdot 10^{-6}$ &  $1.34 \cdot 10^{-6}$ \\
13 & $6.92 \cdot 10^{-5}$ &  $6.98 \cdot 10^{-4}$ \\
14 & 0.00349    &  0.00349    \\
15 & 0.13908    &  0.13906    \\
16 & 0.71110    &  0.71114    \\
17 & 0.14622    &  0.14620    \\
18 & $2.86 \cdot 10^{-5}$ &  $2.85 \cdot 10^{-5}$ \\ \hline
\end{tabular}
\caption{4 choices, $2^{18}$ balls and $2^{14}$ bins}
\end{subfigure}
\caption{The similarity in performance persists under higher loads. \label{table_3}
}
\end{table*}

\begin{table*}[thp]
\begin{subfigure}[h]{0.45\textwidth}
\centering
\begin{tabular}{|c|c|c|} \hline
Load & Fully Random & Double Hashing \\ \hline
0 & 0.12420 & 0.12421 \\
1 & 0.75160 & 0.75158 \\
2 & 0.12420 & 0.12421 \\ \hline
\end{tabular}
\caption{4 choices, $2^{14}$ balls and bins}
\end{subfigure}
\qquad
\begin{subfigure}[h]{0.45\textwidth}
\centering
\begin{tabular}{|c|c|c|} \hline
Load & Fully Random & Double Hashing \\ \hline
0 & 0.12421 & 0.12421 \\
1 & 0.75159 & 0.75158 \\
2 & 0.12421 & 0.12421 \\
3 &         & $7.63 \cdot 10^{-10}$ \\ \hline
\end{tabular}
\caption{4 choices, $2^{18}$ balls and bins}
\end{subfigure}
%
\caption{Double hashing performance with V\"{o}cking's $d$-left scheme. \label{table_4}
}
\end{table*}

We may also consider the maximum load.  In Table~\ref{table_2}, we
consider values of $n$ where the maximum load is at most 3, and examine the
fraction of time a load of 3 is achieved over the 10000 trials.
Again, the difference between the two schemes appears small, to the
point where it would be a challenge to differentiate between the two
approaches.

We focus in on the case of 4 choices with $2^{18}$ balls and bins to
examine the sample standard deviation (across {10000} trials) in
Table~\ref{table_2ba}.  This example is representative of behavior in
our other experiments.  By looking at the number of bins of each load
over several trials, we see the sample standard deviation is very
small compared to the number of bins of a given load, whether using
double hashing or fully random hashing, and again performance is
similar for both.

A reasonable question is whether the same behavior occurs if the average load
is larger than 1.  We have tested this for several cases, and again found
that empirically the behavior is essentially indistinguishable.  As an example, 
Table~\ref{table_3} gives results in the case of $2^{18}$ balls being thrown into $2^{14}$ bins, for
an average load of 16.  Again, the differences are at the level of sampling deviations.

\begin{table*}[thp]
\centering
\begin{tabular}{|c|c|c|c|} \hline
$\lambda$ & Choices & Fully Random & Double Hashing \\ \hline
0.9 & 3 & 2.02805 & 2.02813 \\
0.9 & 4 & 1.77788 & 1.77792 \\
0.99 & 3 & 3.85967 & 3.86073 \\
0.99& 4 & 3.24347 & 3.24410 \\ \hline
\end{tabular}
\caption{$n=2^{14}$ queues, average time}
\label{table_5}
\end{table*}

We note that we obtain similar results under variations of the
standard $d$-choice scheme.  For example, using V\"{o}cking's approach
of splitting in $d$ subtables and breaking ties to the left, we obtain
essentially indistinguishable load distributions with fully random
hashing and double hashing.  Table~\ref{table_4} shows results from a
representative  case where $d=4$, again averaging over 10000 trials.  The
case of $n = 2^{18}$ is instructive; this appears very close to the
threshold where bins with load 3 can appear.  While there appears to
be a deviation, with double hashing have some small fraction of bins
with load 3, this corresponds to exactly 2 bins over the 10000 trials.
Further simulations suggest that this apparent gap is less significant
than it might appear; over 100000 trials, for random, the maximum load
was 3 for three trials, while for double hashing, it was 3 for four
trials.

In the standard queueing setting, balls arrive as a Poisson process of
rate $\lambda n$ for $\lambda < 1$ to a bank of $n$ first-in first-out
queues, and have exponentially distributed service times with mean 1.
Jobs are placed by choosing $d$ queues and going to the queue with the
fewest jobs.  The asymptotic equilibrium distributions for such
systems with independent, uniform choices can be found by fluid
limit models \cite{Mitzenmacher,Vvedenskaya}.  We ran 100 simulations of 10000 seconds,
recording the average time over all packets after time 1000 (allowing
the system to ``burn in''.)  An example appears in
Table~\ref{table_5}.  While double hashing performs slightly worse in
these trials, the gap is far less than 0.1\% in all cases.

\section{Extending the Fluid Limit}
\label{sec:followon}

We sketch an approach to extend the fluid limit result to provide an
$O(\log \log n)$ result.  In fact, we show here that for $n$ balls
being thrown into $n$ bins via double hashing, we obtain a load of
$\log \log n / \log d + O(1)$, avoiding the $O(d)$ term of
Section~\ref{sec:witness}.  While this is technicality for the case
of $d$ constant, this approach could be used to obtain bounds
for super-constant values of $d$.

The basic approach is not new, and has been used in other settings,
such as \cite{ABKU,MitzenmacherThesis}.  Essentially, we can repeat
the ``layered induction'' approach of \cite{ABKU} in the setting of
double hashing, making use of the results of Section~\ref{sec:fluid}
that the deviations from the fully random setting are at most $o(1)$
for a suitable number of levels.

This allows us to state the following theorem:
\begin{theorem} Suppose $n$ balls are placed into $n$ bins using
the balanced allocation scheme with double hashing.
Then with $d \geq 3$ choices (for $d$ constant) the maximum load is $\log \log n / \log d + O(1)$
with high probability.
\end{theorem}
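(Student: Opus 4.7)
The plan is to follow the layered induction framework of \cite{ABKU}, bootstrapping from a constant-level bound established via Theorem~\ref{mainthm}, and using the ancestry-list machinery of Section~\ref{sec:fluid} to push the induction step through in the double hashing setting. First, I invoke Theorem~\ref{mainthm}: the fluid limit values $x_i(1)$ decay doubly exponentially in $i$, so for any target constant $\epsilon > 0$ there is a constant $i_0 = i_0(\epsilon,d)$ with $x_{i_0}(1) < \epsilon/2$, and hence with probability $1-o(1)$, $X_{i_0}(n)/n \leq \epsilon$. I will choose $\epsilon$ small enough (depending on the constant $C$ appearing below) that the recursion $b_{i+1} = C b_i^d$ starting at $b_{i_0} = \epsilon$ is strictly contracting.

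Second, I carry out layered induction: assuming inductively that $X_i(n)/n \leq b_i$, I aim to show that $X_{i+1}(n)/n \leq b_{i+1} = C b_i^d$ with probability $1-O(n^{-2})$. As in \cite{ABKU}, a ball can contribute to a bin of final load at least $i+1$ only if at the time of its placement all $d$ of its choices lie in the current set $B_i$ of bins with load at least $i$; concentration of the number of such balls follows from a Doob martingale and Azuma's inequality in the style of Section~\ref{sec:witness}. The key probability to bound is $\Pr(\text{all }d\text{ choices of a new ball lie in }B_i)$. For this I invoke the ancestry-list machinery: conditional on all ancestry lists having size $O(\log n)$ (Lemma~\ref{lem:branching}), Lemma~\ref{lem:small} gives that the ancestry lists of the $d$ choices are disjoint with probability $1-O(d^2\log^2 n/n)$, and the argument of Lemma~\ref{lem:mainlemma} then pushes the bound down to $b_i^d + o(1)$, matching the fully random setting up to lower-order terms.

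Iterating for $k = \log_d \log_2 n + O(1)$ layers drives $b_{i_0+k}$ down to $O(\log^2 n / n)$, at which point I would switch to a coarser argument that uses only the pairwise uniformity of double hashing: the probability that any specified pair of the $d$ choices falls inside a set of size $O(\log^2 n)$ is $O(\log^4 n / n^2)$, so a union bound over the $\binom{d}{2}$ position pairs and $n$ balls forces, with high probability, no ball to have even two of its choices in $B_{i_0+k}$; hence $X_{i_0+k+1} = 0$. This would yield maximum load at most $i_0+k+1 = \log_d \log n + O(1)$.

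The main obstacle I anticipate is the induction step itself. A worst-case set $B_i$ under double hashing can make $\Pr(\text{all }d\text{ choices in }B_i)$ much larger than $b_i^d$ --- consider $B_i$ a short arithmetic progression whose common difference coincides with a likely stride, in which case a single well-chosen stride puts all $d$ choices in $B_i$ --- so the induction step cannot be taken worst-case over $B_i$ and must instead rely on the fact that the $B_i$ actually produced by the process is sufficiently ``generic.'' The ancestry-list argument delivers exactly this, but carefully tracking the $o(1)$ errors uniformly across all $\log_d \log n$ induction layers, and gluing the double-exponential induction regime to the final pairwise-independence cleanup regime (where $b_i^d$ is no longer comfortably larger than the ancestry error), is the most delicate bookkeeping in the proof.
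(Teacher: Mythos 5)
Your proposal matches the paper's proof in all essentials: layered induction \`a la \cite{ABKU}, the Section~\ref{sec:fluid} ancestry-list machinery to bound each step's per-ball probability by $b_i^d + \eta$ with $\eta = O(d^2\log^2 n/n)$, concentration at each level, and a final ``cleanup'' that falls back on the pairwise uniformity of double hashing once the ancestry-list error is no longer negligible compared to $b_i^d$. You also correctly identify the reason the ABKU worst-case argument would fail here (an adversarial $B_i$ can correlate badly with strides), and why the genericity supplied by the ancestry lists is what rescues the induction.

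There are a few cosmetic differences worth noting. First, the paper doesn't bother invoking Theorem~\ref{mainthm} for the base case; it just uses the trivial deterministic bound $z_6 \le n/(2e)$, which saves a $1-o(1)$ conditioning event --- your use of the fluid limit is sound but heavier than needed. Second, the paper uses the Alon--Spencer Chernoff bound for a dominating binomial $B(n,p_i)$ rather than a Doob--Azuma martingale; either works. Third, and most substantively, the paper cuts the ancestry-list-based induction off early, at $p_i \approx n^{-1/5}$, where $\eta$ is still a lower-order term, and then does several ($\approx 3$) rounds of the pairwise argument to go from $n^{4/5}$-ish bins down to zero; you push the induction all the way to its saturation point $|B_{i_0+k}| = O(\log^2 n)$ and then do a single pairwise round. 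Your version is marginally slicker, but you should verify that the Chernoff/Azuma tail $\exp(-\Theta(\log^2 n))$ at the final induction level still union-bounds comfortably over the $O(\log\log n)$ layers and the inverse-polynomial failure probability of the ancestry-list event ${\cal E}_0$ --- it does, but that accounting is exactly the ``delicate bookkeeping'' you flag, and the paper's more conservative cutoff sidesteps it by never letting $p_i n$ drop below $n^{4/5}$ until it switches regimes.
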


\begin{proof}
Let $z_i$ be the number of bins of load $i$ after all $n$ balls have
been thrown.  We will follow the framework of the original balanced allocations paper \cite{ABKU},
and start by noting that $z_6 \leq n/(2e)$.  Now from the argument of
Section~\ref{sec:fluid}, the probability that the $t$th ball chooses $d$ bins 
all with load at least $i \geq 2$ is bounded above by $z_{i-1}^d/n^d + \eta$,
where $\eta = O(d^2 \log^{2} n)/n$ was determined in Lemma~\ref{lem:mainlemma},
as long as, up to that point, we can condition on all the ancestry
lists being suitably small, which is a high probability event.  We will denote the event 
that the ancestry lists are suitably small throughout the process by ${\cal E}_0$.  

Finally, let $\beta_6 = n/(2e)$ and $\beta_i = 4\beta_{i-1}^d/n^{d-1}$ for $i \geq 6$.  
Let ${\cal E}_i$ be the event that ${\cal E}_0$ occurs and that $z_i \leq \beta_i$.  
(We choose $\beta_i$ values similarly to \cite{ABKU} for convenience, but use the constant
4 on the right hand side whereas \cite{ABKU} uses the constant $e$ to account for
the extra $\eta$ in our probability over just the value $z_{i-1}^d/n^d$.)  
A simple induction using the formula for $\beta_i$ yields $\beta_i \leq n/e^{d^{i-6}}$
for $d \geq 3$.  

Now we fix some $i > 6$ and consider random variables $Y_t$, where $Y_t =
1$ if the following conditions all hold:
all $d$ choices for the $t$th ball have load at least $i-1$, the number of bins with
load at least $i-1$ before the ball is thrown is at most $\beta_{i-1}$, and the ancestry lists
are all suitably small when the ball is thrown so the polylogarithmic bound on the ``extra probability''
that a ball ends up with all $d$ choices having load at least $i-1$ holds.  Let $Y_t =0$ otherwise.  We note
that the number of bins with load at least $i$ is at most the sum of the $Y_t$.
Let $p_i = \beta_{i-1}^d/n^d + \eta$.  
Conditioned on ${\cal E}_{i-1}$, we have 
$$\Pr(z_i \geq k~|~{\cal E}_{i-1}) \leq \Pr(\sum_t Y_t \geq k~|~{\cal E}_{i-1}) \leq \frac{\Pr(\sum_t Y_t \geq k)}{\Pr({\cal E}_{i-1})}.$$
Now the sum $Y_t$ are dominated by a binomial random variable $B(n,p_i)$ of $n$ trials, each with probability $p_i$ of success, because of the definition of the $Y_i$.  

As in \cite{ABKU}, we can use the simple Chernoff bound from \cite{AS}
$$\Pr(B(n,p_i) \geq ep_i n) \leq e^{-p_i n}).$$
Note that, for large enough $n$ and $\beta_{i-1}$, $ep_i n \leq 4\beta_{i-1}^d/n^d$, as $\eta$
will be a lower order term.  
Hence for such values, 
$$\Pr(B(n,p_i) \geq \beta_i) \leq e^{-p_i n}).$$
With these choices, we see that as long as $p_i \geq n^{-1/5}$ (note that for this value of $p_i$,
$\eta$ is indeed a lower order term), 
$$\Pr(\neg {\cal E}_{i}~|~{\cal E}_{i-1}) \leq e^{-n^{4/5}} / \Pr({\cal E}_{i-1}),$$
and using 
$$\Pr(\neg {\cal E}_{i}) \leq \Pr(\neg {\cal E}_{i}~|~{\cal E}_{i-1})\Pr({\cal E}_{i-1}) + \Pr(\neg{\cal E}_{i-1}),$$
we have 
$$\Pr(\neg {\cal E}_{i}) \leq e^{-n^{4/5}} + \Pr(\neg {\cal E}_{i-1}).$$
Recall again that ${\cal E}_{6}$ depended on ${\cal E}_{0}$ and 
$z_6 \leq \beta_6 =n/(2e)$, and the latter holds with certainty.  

Note that we only require $i^* = \log \log n/ \log d +O(1)$ before $p_i \leq n^{-1/5}$,
based on the bound for the $\beta_i$.
Hence the total probability that the required events ${\cal E}_{i}$ do not hold up to this
point is bounded by $\Pr(\neg{\cal E}_{0}) + O(\log \log n)\cdot e^{-n^{4/5}}.$
Hence, as long $\Pr(\neg{\cal E}_{0})$ is $1-o(1)$ (which we argued in Section~\ref{sec:fluid}),
we are good for loads up to $i^*$.  
After only one more round, using the same argument, we can get to the point where 
$z_{i^*+1} \leq n^{5/6}$, using the same Chernoff bound argument, since the expected
number of bins with load at least $i^*+1$ would be dominated by $n^{4/5}$.  

{F}rom this point, one can show that the maximum load is $i^* + c$ for
some constant $c$ with high probability by continuing with a variation of the {\em layered
  induction} argument as used in \cite{ABKU}.  If we condition on
there being $n^{1-\zeta}$ bins with load at least $i'$ for some $i' \geq
i^*$, for a ball have all $d$ choices have bins with at least $i'+1$, it must have at
least two of its bin choices have load at least $i'$.  Even when using
double hashing, for any ball, any pair of the $d$ choices of bins are
chosen independently from all possible pairs of distinct
bins\footnote{Here we again assume $n$ is prime; if not, we need to
  take into account the issue that the offset is relatively prime to
  $n$.}; hence, by a union bound the probability any ball causes a bin
to have load at least $i'+1$ is at most ${d \choose 2}n^{-2\zeta}$,
giving an expected number of bins of load at least $i'+1$ of at most
${d \choose 2}n^{1-2\zeta}$.  (Here this step is slightly different than
the corresponding step in \cite{ABKU}; because of the use of double hashing in
place of independent hashes, we use a union bound over the ${d \choose
  2}$ pairs of bins.  This avoids the issue of the ancestry lists completely at this 
point of the argument,
which we take advantage of once we've gotten down to a small enough number of bins to complete
the argument.)  

Applying the same Chernoff bounds as previously, we find 
$z_{i^*+2} \leq en^{2/3}$ with high probability, 
$z_{i^*+3} \leq e^2n^{1/3}$ with high probability.
By a union bound, the probability of any ball having at least 2 choices
with load at least $i^*+4$ is at most $n \cdot  (e^2 n^{-2/3})^2 = o(1)$, 
and hence $z_{i^*+4} = 0$ with probability $1-o(1)$.  
Note can make the probability smaller (such as $1-o(1/n)$) by taking a larger constant $O(1)$ term.  
This gives that the maximum load is $\log \log n / \log d + O(1)$
with high probability under double hashing.  
\end{proof}
\end{document}